\author{Zolt\'an \'Esik\inst{1}\fnmsep%
  \thanks{Partially supported by grant no.~K~75249 from the National
    Foundation of Hungary for Scientific Research and by the
    T\'AMOP-4.2.2/08/1/2008-0008 program of the Hungarian National
    Development Agency.}
  \and Andreas Maletti\inst{2}\fnmsep%
  \thanks{Financially supported by the \emph{Ministerio de Educaci\'on
      y Ciencia} (MEC) grant JDCI-2007-760 and the \emph{European
      Science Foundation} (ESF) short-visit grant~2978 in the activity
    ``Automata: from Mathematics to Applications''.}}
\title{Simulations of Weighted Tree Automata}
\institute{University of Szeged, Department of Computer Science \\
  \'Arp\'ad t\'er 2, 6720 Szeged, Hungary \\
  email: \url{ze@inf.u-szeged.hu} \\[0.5ex]
  \and Universitat Rovira i Virgili, Departament de Filologies
  Rom\`aniques \\ Avinguda de Catalunya 35, 43002 Tarragona, Spain \\
  email: \url{andreas.maletti@urv.cat}}
\DeclareMathOperator{\rk}{rk}
\providecommand*{\nat}[0]{\ensuremath{\bbbn}}
\providecommand*{\integer}[0]{\ensuremath{\bbbz}}
\providecommand*{\abs}[1]{\ensuremath{\lvert #1 \rvert}}
\providecommand*{\sem}[1]{\ensuremath{\lVert #1 \rVert}}
\providecommand*{\seq}[3]{\ensuremath{#1_{#2}, \dotsc, #1_{#3}}}
\providecommand*{\word}[3]{\ensuremath{#1_{#2} \dotsm #1_{#3}}}
\providecommand*{\series}[2]{\ensuremath{#1 \langle\!\langle T_{#2}
    \rangle\!\rangle}} 
\begin{document}
\maketitle

\begin{abstract}
  Simulations of weighted tree automata~(wta) are considered.  It is
  shown how such simulations can be decomposed into simpler functional
  and dual functional simulations also called forward and backward
  simulations.  In addition, it is shown in several cases (fields,
  commutative rings, \textsc{Noetherian} semirings, semiring of
  natural numbers) that all equivalent wta $M$~and~$N$ can be joined
  by a finite chain of simulations.  More precisely, in all mentioned
  cases there exists a single wta that simulates both $M$~and~$N$.  Those
  results immediately yield decidability of equivalence provided that
  the semiring is finitely (and effectively) presented.
\end{abstract}

\section{Introduction}
\label{sec:Intro}
Weighted tree automata (or equivalently, weighted tree grammars) are
widely used in applications such as model
checking~\cite{abdjonmahors02} and natural language
processing~\cite{knigra05}.  They finitely represent mappings, called
tree series, that assign a weight (taken from a semiring) to each
tree.  For example, a probabilistic parser would return a tree series
that assigns to each parse tree its likelihood.  Consequently, several
toolkits~\cite{klamol01,maykni06,cle08} implement weighted tree
automata.

The notion of simulation that is used in this paper is a
generalization of the simulations for unweighted and weighted (finite)
string automata of~\cite{bloesi93,esikui01}.  The aim is to relate
structurally equivalent automata.  The results
of~\cite[Section~9.7]{bloesi93} and \cite{koz94} show that two
unweighted string automata (i.e., potentially nondeterministic string
automata over the \textsc{Boolean} semiring) are equivalent if and
only if they can be connected by a finite chain of relational
simulations, and that in fact \emph{functional} and \emph{dual
  functional} simulations are sufficient.  Simulations for weighted
string automata~(wsa) are called \emph{conjugacies}
in~\cite{bealomsak05,bealomsak06}, where it is shown that for all
fields, many rings including the ring~$\integer$ of integers, and the
semiring~$\nat$ of natural numbers, two wsa are equivalent if and only
if they can be connected by a finite chain of simulations.  It is also
shown that even a finite chain of functional (\emph{covering}) and
dual functional (\emph{co-covering}) simulations is sufficient.  The
origin of those results can be traced back to the pioneering work of
\textsc{Sch\"utzenberger} in the early 60's, who proved that every wsa
over a field is equivalent to a minimal wsa that is simulated by every
\emph{trim} equivalent wsa~\cite{berreu84}.  Relational simulations of
wsa are also studied in~\cite{buc08}, where they are used to reduce
the size of wsa.  The relationship between functional simulations and
the \textsc{Milner}-\textsc{Park} notion of
bisimulation~\cite{mil80,par81} is discussed in~\cite{bloesi93,buc08}.

In this contribution, we investigate simulations for weighted (finite)
tree automata~(wta).  \textsc{Sch\"utzenberger}'s minimization method
was extended to wta over fields in~\cite{aleboz89,boz91}.  In
addition, relational and functional simulations for wta are probably
first used in~\cite{esi98,esi10b,hogmalmay07d}.  Moreover, simulations
can be generalized to presentations in algebraic
theories~\cite{bloesi93}, which seems to cover all mentioned
instances.  Here, we extend the results
of~\cite{bealomsak05,bealomsak06} to wta.  In particular, we show that
two wta over a ring, \textsc{Noetherian} semiring, or the
semiring~$\nat$ are equivalent if and only if they are connected by a
finite chain of simulations.  Moreover, we discuss when the
simulations can be replaced by functional and dual functional
simulations, which are efficiently computable~\cite{hogmalmay07d}.
Such results are important because they immediately yield decidability
of equivalence provided that the semiring is finitely and effectively
presented.

\section{Preliminaries}
\label{sec:Prelim}
The set of nonnegative integers is~$\nat$.  For every $k \in \nat$,
the set $\{i \in \nat \mid 1 \leq i \leq k\}$ is simply denoted
by~$[k]$. We write $\abs A$ for the cardinality of the set~$A$.  A
\emph{semiring} is an algebraic structure~${\cal A} = (A, \mathord+,
\mathord\cdot, 0, 1)$ such that $(A, \mathord+, 0)$ and $(A,
\mathord\cdot, 1)$ are monoids, of which the former is commutative,
and $\cdot$~distributes both-sided over finite sums (i.e., $a \cdot 0
= 0 = 0 \cdot a$ for every $a \in A$ and $a \cdot (b + c) = ab + ac$
and $(b + c) \cdot a = ba + ca$ for every $a, b, c \in A$).  The
semiring~$\mathcal A$ is \emph{commutative} if $(A, \mathord\cdot, 1)$
is commutative.  It is a \emph{ring} if for every $a \in A$ there
exists an \emph{additive inverse}~$-a \in A$ such that $a + (-a) = 0$.
The set~$U$ is the set $\{ a \in A \mid \exists b \in A \colon ab = 1
= ba\}$ of \emph{(multiplicative) units}.  The semiring~${\cal A}$ is
a \emph{semifield} if $U = A \setminus \{0\}$; i.e., for every $a \in
A$ there exists a \emph{multiplicative inverse}~$a^{-1} \in A$ such
that $aa^{-1} = 1 = a^{-1}a$.  A \emph{field} is a semifield that is
also a ring.  For every $B \subseteq A$ let $\langle
B\rangle_{\mathord+} = \{ b_1 + \dotsb + b_n \mid n \in \nat, \seq b1n
\in B\}$. If $A = \langle B\rangle_{\mathord+}$, then ${\cal A}$~is
\emph{additively generated by~$B$}.  Finally, it is
\emph{equisubtractive} if for every $a_1, a_2, b_1, b_2 \in A$ with
$a_1 + b_1 = a_2 + b_2$ there exist $c_1, c_2, d_1, d_2 \in A$ such
that (i)~$a_1 = c_1 + d_1$, (ii)~$b_1 = c_2 + d_2$, (iii)~$a_2 = c_1 +
c_2$, and (iv)~$b_2 = d_1 + d_2$.

The semiring~${\cal A}$ is \emph{zero-sum free} if $a + b = 0$ implies
$0 \in \{a, b\}$ for every $a, b \in A$.  Clearly, any nontrivial
(i.e., $0 \neq 1$) ring is not zero-sum free.  Moreover, ${\cal A}$~is
\emph{zero-divisor free} if $a \cdot b = 0$ implies $a = 0 = b$ for
every $a, b \in A$.  All semifields are trivially zero-divisor free.
Finally, the semiring~${\cal A}$ is \emph{positive} if it is zero-sum
free and zero-divisor free.  An infinitary sum
operation~$\mathord{\sum}$ is a family~$(\mathord{\sum_I})_I$ such
that $\mathord{\sum_I} \colon A^I \to A$.  We generally write $\sum_{i
  \in I} a_i$ instead of $\sum_I (a_i)_{i \in I}$.  The
semiring~${\cal A}$ together with the infinitary sum
operation~$\mathord{\sum}$ is
\emph{complete}~\cite{eil74,hebwei98,gol99,kar04} if
\begin{itemize}
\item $\sum_{i \in \{j_1, j_2\}} a_i = a_{j_1} + a_{j_2}$ for
  all $j_1\neq j_2$ and $a_{j_1}, a_{j_2} \in A$,
\item $\sum_{i \in I} a_i = \sum_{j \in J} \bigl( \sum_{i \in I_j} a_i
  \bigr)$ for every index set~$I$, partition $(I_j)_{j \in J}$ of~$I$,
  and $(a_i)_{i \in I} \in A^I$, and
\item $a \cdot \bigl(\sum_{i \in I} a_i \bigr) = \sum_{i \in I} aa_i$
  and $\bigl(\sum_{i \in I} a_i \bigr) \cdot a = \sum_{i \in I} a_ia$
  for every $a \in A$, index set~$I$, and $(a_i)_{i \in I} \in A^I$.
\end{itemize}
An ${\cal A}$-\emph{semimodule} is a commutative monoid $(B,
\mathord{+}, 0)$ together with an action $\mathord{\cdot} \colon A
\times B \to B$, written as juxtaposition, such that for every $a, a'
\in A$ and $b, b' \in B$
\begin{itemize}
\item $(a + a') b = ab + a'b$ and $a(b + b') = ab + ab'$,
\item $0 b = 0 = a 0$, $1 b = b$ and $(a \cdot a')b = a(a'b)$.
\end{itemize}
The semiring~${\cal A}$ is \textsc{Noetherian} if all subsemimodules
of every finitely-generated ${\cal A}$-semimodule are again
finitely-generated.

In the following, we often identify index sets of the same
cardinality.  Let $X \in A^{I_1 \times J_1}$ and $Y \in A^{I_2 \times
  J_2}$ for some finite sets~$I_1, I_2, J_1, J_2$.  We use upper-case
letters (like~$C$, $D$, $E$, $X$, $Y$) for matrices and the
corresponding lower-case letters for their entries.  A matrix $X \in
A^{I \times J}$ is \emph{relational} if $x_{ij} \in \{0, 1\}$ for
every $i \in I$ and $j \in J$.  Clearly, a relational matrix defines a
relation $\rho_X \subseteq I \times J$ by $(i, j) \in \rho_X$ if and
only if $x_{ij} = 1$ (and vice versa).  Moreover, we call a relational
matrix \emph{functional}, \emph{surjective}, or \emph{injective} if
its associated relation has this property.  As usual, we denote the
\emph{transpose} of a matrix~$X$ by~$X^{\mathrm T}$, and we call
$X$~\emph{nondegenerate} if its has no rows or columns of entirely
zeroes.  A \emph{diagonal} matrix~$X$ is such that $x_{ij} = 0$ for
every $i \neq j$.  Finally, the matrix~$X$ is invertible if there
exists a matrix~$Y$ such that $XY = I = YX$ where $I$~is the unit
matrix.  The \textsc{Kronecker} product $X \otimes Y \in A^{(I_1
  \times I_2) \times (J_1 \times J_2)}$ is such that $(X \otimes
Y)_{(i_1, i_2), (j_1, j_2)} = x_{i_1, j_1} y_{i_2, j_2}$ for every
$i_1 \in I_1$, $i_2 \in I_2$, $j_1 \in J_1$, and $j_2 \in J_2$.
Clearly, the \textsc{Kronecker} product is, in general, not
commutative and $(1) \in A^{[1]}$ acts both-sided as neutral element.
We let $X^{0,\mathord{\otimes}} = (1)$ and $X^{i+1, \mathord{\otimes}}
= X^{i, \mathord{\otimes}} \otimes X$ for every $i \in \nat$.

Finally, let us move to trees.  A \emph{ranked alphabet} is a finite
set~$\Sigma$ together with a mapping~$\mathord{\rk} \colon \Sigma \to
\nat$.  We often just write~$\Sigma$ for a ranked alphabet and assume
that the mapping~$\rk$ is implicit.  We write $\Sigma_k = \{\sigma \in
\Sigma \mid \rk(\sigma) = k\}$ for the set of all $k$-ary symbols.
The set of $\Sigma$-\emph{trees} is the smallest set~$T_\Sigma$ such
that $\sigma(\seq t1k) \in T_\Sigma$ for all $\sigma \in \Sigma_k$ and
$\seq t1k \in T_\Sigma$.  A \emph{tree series} is a mapping $\varphi
\colon T_\Sigma \to A$.  The set of all such tree series is denoted
by~$\series A\Sigma$.  For every $\varphi \in \series A\Sigma$ and $t
\in T_\Sigma$, we often write~$(\varphi, t)$ instead of~$\varphi(t)$.
Let ${\scriptstyle \Box}$ be a distinguished nullary symbol such that
${\scriptstyle \Box} \notin \Sigma$. A $\Sigma$-\emph{context}~$c$ is
a tree of~$T_{\Sigma \cup \{{\scriptstyle \Box}\}}$, in which the
symbol~${\scriptstyle \Box}$ occurs exactly once.  The set of all
$\Sigma$-contexts is denoted by~$C_\Sigma$. For every $c \in C_\Sigma$
and $t \in T_\Sigma$, we write~$c[t]$ for the $\Sigma$-tree obtained
by replacing the unique occurrence of~${\scriptstyle \Box}$ in~$c$
by~$t$.

A \emph{weighted tree automaton (over~${\cal A}$)}, for short: wta, is
a system $(\Sigma, Q, \mu, F)$ with
\begin{itemize}
\item an input ranked alphabet~$\Sigma$,
\item a finite set~$Q$ of \emph{states},
\item transitions $\mu = (\mu_k)_{k \in \nat}$ such that $\mu_k \colon
  \Sigma_k \to A^{Q^k \times Q}$ for every $k \in \nat$, and
\item a \emph{final weight} vector $F \in A^Q$.
\end{itemize}
Next, let us introduce the semantics~$\sem M$ of~$M$.  We first define
the function $h_\mu \colon T_\Sigma \to A^Q$ for every $\sigma \in
\Sigma_k$ and $\seq t1k \in T_\Sigma$ by \[ h_\mu(\sigma(\seq t1k)) =
\bigl( h_\mu(t_1) \otimes \dotsm \otimes h_\mu(t_k) \bigr) \cdot
\mu_k(\sigma) \enspace, \] where the final product~$\cdot$ is the
classical matrix product.  Then $(\sem M, t) = h_\mu(t) F$ for every
$t \in T_\Sigma$, where the product is the usual inner (dot)
product.  

Let $f \colon A \to \{0, 1\}$ be such that $f(0) = 0$ and $f(a) = 1$
for all $a \in A \setminus \{0\}$. The Boolean wta~$f(M)$ (i.e.,
essentially an unweighted tree automaton) corresponding to~$M$ is
$(\Sigma, Q, \mu', F')$ where
\begin{itemize}
\item $\mu'_k(\sigma)_{w, q} = f(\mu_k(\sigma)_{w, q})$ for every
  $\sigma \in \Sigma_k$, $w \in Q^k$, and $q \in Q$, and
\item $F'(q) = f(F(q))$ for every $q \in Q$.
\end{itemize}
The wta~$M$ is \emph{trim} if every state is accessible and
co-accessible in~$f(M)$. In other words, the wta~$M$ is trim if
$f(M)$ is trim.

\section{Simulation}
\label{sec:Sim}
Simulations of automata were defined in~\cite{bloesi93,esikui01} in
order to provide a structural characterization of equivalent
automata.  We will essentially follow the presentation
of~\cite{bealomsak05} here.

\begin{definition}
  \label{df:Conj}
  Let $M = (\Sigma, Q, \mu, F)$ and $N = (\Sigma, P, \nu, G)$ be wta.
  Then $M$~\emph{simulates}~$N$ if there exists a matrix~$X \in A^{Q
    \times P}$ such that
  \begin{itemize}
  \item[(i)] $F = XG$, and
  \item[(ii)] $\mu_k(\sigma) X = X^{k, \mathord{\otimes}} \cdot
    \nu_k(\sigma)$ for every $\sigma \in \Sigma_k$.
  \end{itemize}
  The matrix~$X$ is called \emph{transfer matrix}, and we write $M
  \stackrel X\to N$ if $M$~simulates~$N$ with transfer matrix~$X$.
\end{definition}

\begin{figure}
  \centering
  \includegraphics{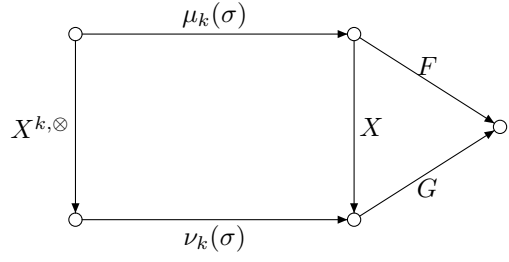}
  \caption{Illustration of simulation.}
  \label{fig:Conj}
\end{figure}

Note that $X^{k, \mathord{\otimes}}_{\word i1k, \word j1k} =
\prod_{\ell = 1}^k x_{i_\ell, j_\ell}$.  We illustrate
Definition~\ref{df:Conj} in Fig.~\ref{fig:Conj}.  If $M \stackrel X\to
M'$ and $M' \stackrel Y\to N$, then $M \stackrel{XY}\to N$.  Thus,
simulations define a preorder on wta.

\begin{theorem}
  \label{thm:Equiv}
  If $M$~simulates~$N$, then $M$~and~$N$ are equivalent.
\end{theorem}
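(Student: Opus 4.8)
The plan is to show that $h_\nu(t)$ determines $h_\mu(t)$ via the transfer matrix, namely that $h_\mu(t) = X h_\nu(t)$ for every $t \in T_\Sigma$, and then to combine this with condition~(i) to conclude $(\sem M, t) = (\sem N, t)$ for all $t$. The identity $h_\mu(t) = X h_\nu(t)$ should be proved by structural induction on~$t$.

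For the induction, consider $t = \sigma(\seq t1k)$ with $\sigma \in \Sigma_k$ and $\seq t1k \in T_\Sigma$, and assume $h_\mu(t_i) = X h_\nu(t_i)$ for each $i \in [k]$. Then
\[ h_\mu(t) = \bigl( h_\mu(t_1) \otimes \dotsm \otimes h_\mu(t_k) \bigr) \cdot \mu_k(\sigma) = \bigl( (X h_\nu(t_1)) \otimes \dotsm \otimes (X h_\nu(t_k)) \bigr) \cdot \mu_k(\sigma) \enspace. \]
The key algebraic fact needed here is the mixed-product property of the \textsc{Kronecker} product: $(X h_\nu(t_1)) \otimes \dotsm \otimes (X h_\nu(t_k)) = X^{k, \mathord\otimes} \cdot \bigl( h_\nu(t_1) \otimes \dotsm \otimes h_\nu(t_k) \bigr)$, viewing each $h_\nu(t_i)$ as a column vector (a matrix with one column). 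Granting this, the right-hand side becomes $X^{k, \mathord\otimes} \cdot \bigl( h_\nu(t_1) \otimes \dotsm \otimes h_\nu(t_k) \bigr) \cdot \mu_k(\sigma)$, and applying condition~(ii) in the form $X^{k, \mathord\otimes} \cdot \nu_k(\sigma) = \mu_k(\sigma) X$ — wait, that is the wrong orientation; condition~(ii) reads $\mu_k(\sigma) X = X^{k,\mathord\otimes} \cdot \nu_k(\sigma)$, so I rewrite $X^{k,\mathord\otimes} \cdot \bigl( h_\nu(t_1) \otimes \dotsm \otimes h_\nu(t_k) \bigr) \cdot \mu_k(\sigma)$ by first noting associativity of matrix product lets me regroup, but the factor $\mu_k(\sigma)$ sits on the right of the vector, not adjacent to $X^{k,\mathord\otimes}$. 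So instead I should factor as $X^{k,\mathord\otimes} \cdot \nu_k(\sigma)$ only after moving things around: actually the clean route is $\bigl( h_\nu(t_1) \otimes \dotsm \otimes h_\nu(t_k) \bigr)$ is a row vector here (since $h_\mu \colon T_\Sigma \to A^Q$ with the product $h_\mu(t_i) \otimes \dotsm$ followed by right multiplication with $\mu_k(\sigma) \in A^{Q^k \times Q}$, the $h_\mu(t_i)$ are row vectors in $A^Q$). So the correct mixed-product identity is $(h_\nu(t_1) X^{\mathrm T}?) $ — let me restate: with row vectors, $h_\mu(t_i) = h_\nu(t_i) X^{\mathrm T}$ is not what condition~(i)/(ii) suggest either. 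The right invariant, matching $F = XG$ and $\mu_k(\sigma) X = X^{k,\mathord\otimes}\nu_k(\sigma)$, is $h_\mu(t) = X h_\nu(t)$ with column vectors; one checks the recursion is consistent by transposing, or simply by carrying $h_\mu(t)^{\mathrm T} = h_\nu(t)^{\mathrm T} X^{\mathrm T}$ throughout. I would fix the orientation bookkeeping at the start and then the induction goes through using exactly the mixed-product law and condition~(ii).

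The main obstacle is precisely this orientation/bookkeeping care together with establishing the mixed-product property $(Xv_1) \otimes \dotsm \otimes (Xv_k) = X^{k,\mathord\otimes} (v_1 \otimes \dotsm \otimes v_k)$ for the iterated \textsc{Kronecker} product; this follows from the standard identity $(AB) \otimes (CD) = (A \otimes C)(B \otimes D)$ (valid over any semiring, by a direct computation on entries using the definition $(X \otimes Y)_{(i_1,i_2),(j_1,j_2)} = x_{i_1,j_1} y_{i_2,j_2}$ and distributivity) applied $k-1$ times, together with the definition $X^{k+1,\mathord\otimes} = X^{k,\mathord\otimes} \otimes X$. Once $h_\mu(t) = X h_\nu(t)$ is in hand, the base case (symbols of rank~$0$) is the instance $k=0$, where $X^{0,\mathord\otimes} = (1)$ and the claim reduces to $\mu_0(\sigma) = \mu_0(\sigma) X \cdot$ hmm again need the $k=0$ reading: $h_\mu(\sigma) = \mu_0(\sigma)$ seen as an element of $A^Q$, and condition~(ii) with $k=0$ gives $\mu_0(\sigma) X = (1) \cdot \nu_0(\sigma) = \nu_0(\sigma)$, i.e. exactly $h_\mu(\sigma) = X h_\nu(\sigma)$ after transposing, so the base case is subsumed. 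Finally, $(\sem M, t) = h_\mu(t)^{\mathrm T} F = h_\nu(t)^{\mathrm T} X^{\mathrm T} F$; but wait, $F = XG$ gives $X^{\mathrm T}F$? No: with $h_\mu(t) = Xh_\nu(t)$ as columns, $(\sem M,t) = F^{\mathrm T}h_\mu(t) = F^{\mathrm T}Xh_\nu(t) = (X^{\mathrm T}F)^{\mathrm T}h_\nu(t)$, and I need $X^{\mathrm T}F$; hmm, $F=XG$ does not directly give this. The resolution is that in the paper's convention $h_\mu(t)F$ is a row-times-column inner product, so $h_\mu(t) \in A^Q$ is a row vector, and the correct invariant is $h_\mu(t) = h_\nu(t) X^{\mathrm T}$, equivalently $h_\mu(t)^{\mathrm T} = X h_\nu(t)^{\mathrm T}$; then $(\sem M,t) = h_\mu(t) F = h_\nu(t) X^{\mathrm T} F$ and we would need $X^{\mathrm T} F = G$, which again is not $F = XG$. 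I therefore expect the genuinely correct invariant to be $h_\mu(t)^{\mathrm T} = X \, h_\nu(t)^{\mathrm T}$ is wrong too; rather it must be that the row vector $h_\mu(t)$ satisfies $h_\mu(t) = h_\nu(t) \cdot Z$ for the matrix $Z \in A^{P \times Q}$ that makes condition~(i) read $F = XG$ work out — and indeed if the invariant is stated with $X$ on the correct side throughout, namely proving by induction that $h_\nu(t) = h_\mu(t) X$ fails dimensionally since $X \in A^{Q \times P}$ and $h_\mu(t) \in A^Q$ so $h_\mu(t) X \in A^P = $ the right space for $h_\nu(t)$ — yes: the invariant is $h_\nu(t) = $ no. Let me settle it cleanly: $h_\mu(t) \in A^Q$, $X \in A^{Q\times P}$, so the only sensible product is, treating $h_\mu(t)$ as a $1 \times Q$ row, $h_\mu(t) X \in A^{1 \times P}$, a candidate for $h_\nu(t)$; but condition~(i) is $F = XG$ with $F \in A^Q$, $G \in A^P$ as column vectors, $XG \in A^Q$, consistent. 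So the invariant to prove is $h_\mu(t) X = h_\nu(t)$? Then $(\sem N, t) = h_\nu(t) G = h_\mu(t) X G = h_\mu(t) F = (\sem M,t)$, which is exactly what we want, and it uses $F = XG$ directly. So the clean statement: prove by structural induction that $h_\mu(t)\, X = h_\nu(t)$ for all $t \in T_\Sigma$, using in the inductive step the mixed-product law $(v_1 X) \otimes \dotsm \otimes (v_k X) = (v_1 \otimes \dotsm \otimes v_k) \cdot X^{k,\mathord\otimes}$ and condition~(ii); then conclude via condition~(i). The main obstacle is just verifying the mixed-product law for the iterated \textsc{Kronecker} power, which is a routine entrywise computation, plus the inductive step's regrouping of matrix products by associativity.
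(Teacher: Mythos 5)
Your final formulation---proving $h_\mu(t)\,X = h_\nu(t)$ by structural induction via the mixed-product law $(v_1 X) \otimes \dotsm \otimes (v_k X) = (v_1 \otimes \dotsm \otimes v_k) \cdot X^{k,\mathord\otimes}$ and condition~(ii), then concluding with $(\sem M, t) = h_\mu(t)F = h_\mu(t)XG = h_\nu(t)G = (\sem N, t)$---is exactly the paper's proof. The lengthy orientation detour is unnecessary but you arrive at the correct invariant, and the argument is sound.
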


\begin{proof}
  Let $M = (\Sigma, Q, \mu, F)$ and $N = (\Sigma, P, \nu, G)$, and let
  $X \in A^{Q \times P}$ be a transfer matrix.  We claim that
  $h_\mu(t)X = h_\nu(t)$ for every $t \in T_\Sigma$.  We prove this by
  induction on~$t$.  Let $t = \sigma(\seq t1k)$ for some $\sigma \in
  \Sigma_k$ and $\seq t1k \in T_\Sigma$.
  \begin{align*}
    &\phantom{{}={}} h_\mu(\sigma(\seq t1k)) X = \bigl( h_\mu(t_1)
    \otimes \dotsm \otimes h_\mu(t_k) \bigr) \cdot \mu_k(\sigma) X \\
    &= \bigl( h_\mu(t_1) \otimes \dotsm \otimes h_\mu(t_k) \bigr)
    \cdot X^{k, \mathord{\otimes}} \cdot \nu_k(\sigma) = \bigl(
    h_\mu(t_1)X \otimes \dotsm \otimes h_\mu(t_k)X \bigr) \cdot
    \nu_k(\sigma) \\
    &= \bigl( h_\nu(t_1) \otimes \dotsm \otimes h_\nu(t_k) \bigr)
    \cdot \nu_k(\sigma) = h_\nu(\sigma(\seq t1k))
  \end{align*}
  With this claim, the statement can now be proved easily.  For every
  $t \in T_\Sigma$
  \[ (\sem M, t) = h_\mu(t)F = h_\mu(t)XG = h_\nu(t)G = (\sem N, t)
  \enspace. \tag*{\qed} \]
\end{proof}

\begin{lemma}
  \label{lm:Trim}
  Let $M$~and~$N$ be trim wta such $M \stackrel X\to N$.  If (i)~$X$
  is functional or (ii)~${\cal A}$ is positive, then $X$~is
  nondegenerate.
\end{lemma}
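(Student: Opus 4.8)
The plan is to recall that, since $M$ and $N$ are trim, $f(M)$ and $f(N)$ are trim Boolean tree automata, so every state of $M$ has a tree that reaches it with nonzero weight and a context that extracts it with nonzero weight, and similarly for $N$. Concretely, from Theorem~\ref{thm:Equiv} we have the identity $h_\mu(t)X = h_\nu(t)$ for every $t \in T_\Sigma$, and a dual computation on contexts should give, for every $\Sigma$-context $c$, that $X \cdot (h_\nu^c G) = h_\mu^c F$ where $h_\mu^c \in A^{Q \times Q}$ is the matrix describing how weights propagate through the context~$c$ (so that $h_\mu(c[t]) F = h_\mu(t) \, h_\mu^c F$). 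This dual statement is proved by a straightforward induction on~$c$ using condition~(ii) of Definition~\ref{df:Conj}, just as in the proof of Theorem~\ref{thm:Equiv}.

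Next I would argue that no row of $X$ is zero. Fix $q \in Q$. Since $M$ is trim, $q$ is accessible in $f(M)$, so there is a tree $t$ with $h_\mu(t)_q \neq 0$; pick such a $t$ with, additionally, $h_\mu(t)_q$ equal to a specific nonzero value. Consider the vector $h_\mu(t)X = h_\nu(t)$. In case~(ii), when $\mathcal A$ is positive, the entry $(h_\mu(t)X)_p = \sum_{q' \in Q} h_\mu(t)_{q'} x_{q'p}$; if the whole $q$-th row of $X$ were zero we cannot immediately conclude, so instead I use co-accessibility: there is a context $c$ with $(h_\mu^c F)_q \neq 0$, and then $(\sem M, c[t]) = h_\mu(t)\,h_\mu^c F$ has a nonzero summand $h_\mu(t)_q (h_\mu^c F)_q$; by zero-sum freeness and zero-divisor freeness this forces $(\sem M, c[t]) \neq 0$, hence $h_\nu(c[t])G \neq 0$, hence (again by positivity) some state $p$ of $N$ has $h_\nu(t)_p \neq 0$ and $(h_\nu^c G)_p \neq 0$. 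Combining $h_\mu(t) X = h_\nu(t)$ with $h_\nu(t)_p \neq 0$ and positivity gives that $h_\mu(t)_{q'} x_{q'p} \neq 0$ for some $q'$, so $X$ has a nonzero entry; to pin it to row~$q$ one chooses $t$ minimal/appropriate or iterates the argument over all accessible states — the cleanest route is to show the relation $\rho_X$ is total on $Q$ and surjective onto $P$ by a counting/support argument on $\sum_{t} $ supports. In the functional case~(i), $X$ relational and functional means each row has at most one nonzero entry (equal to $1$); the identity $h_\mu(t)X = h_\nu(t)$ then reads $h_\nu(t)_p = \sum_{q : x_{qp}=1} h_\mu(t)_q$, and accessibility of each $q$ in $f(M)$ together with co-accessibility forces the image to be all of $P$ and the domain to be all of $Q$, so $X$ is nondegenerate; here no positivity is needed because functionality already controls cancellation.

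The main obstacle I anticipate is the bookkeeping that upgrades "some entry of $X$ is nonzero" to "every row and every column of $X$ is nonzero": a single accessible tree or co-accessible context only witnesses one state at a time, and in the positive case one must rule out cancellation both when pushing weights up through $t$ and when pulling them down through $c$. The right formulation is to prove, for every state $q \in Q$, the existence of $p \in P$ with $x_{qp} \neq 0$ (using accessibility of $q$ plus the context that co-accesses $q$, plus positivity to prevent the nonzero contribution of $q$ from being cancelled), and symmetrically for every $p \in P$ the existence of $q \in Q$ with $x_{qp} \neq 0$ (using that $N$ is trim and running the same argument with the roles of $\mu,\nu$ swapped via $h_\mu(t)X = h_\nu(t)$ and the dual context identity). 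Once both directions are in hand, nondegeneracy of $X$ is immediate.
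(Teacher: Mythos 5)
There is a genuine gap, and it sits exactly where you flag ``the main obstacle''. Your argument works with the aggregated vectors $h_\mu(t)$, $h_\nu(t)$ and with context vectors, and tries to contradict trimness from the identity $h_\mu(t)X = h_\nu(t)$. But trimness is defined through the Boolean automaton $f(M)$: it asserts the existence of a \emph{single} run all of whose transition weights are nonzero, whereas $h_\mu(t)_q$ is a \emph{sum} over all runs. The implication ``$q$ accessible in $f(M)$ implies $h_\mu(t)_q \neq 0$ for some $t$'' (and its co-accessibility/context analogue) is valid only in positive semirings. This breaks your treatment of case~(i), where $X$ is functional but ${\cal A}$ is arbitrary: if no $q$ is mapped to $p$ by $\rho_X$, you obtain $h_\nu(t)_p = 0$ for all $t$ as an empty sum, but over a ring this does not contradict accessibility of $p$ in $f(N)$, since cancellation among runs can make $h_\nu(t)_p$ vanish on every tree anyway. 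And in case~(ii) your chain of deductions only ever produces \emph{some} nonzero entry of $X$; the step that localizes it to the fixed row $q$ or column $p$ is precisely the one you defer to an unspecified ``counting/support argument'', so the proof is not closed there either.

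The paper sidesteps both problems by arguing at the level of individual transition weights rather than of $h_\mu$. With $J = \{p \in P \mid \text{column $p$ of $X$ is zero}\}$, the $(\word q1k, j)$-entry of $\mu_k(\sigma)X = X^{k,\mathord{\otimes}} \cdot \nu_k(\sigma)$ is $0$ for $j \in J$; functionality collapses the right-hand side to the single term $\nu_k(\sigma)_{\rho_X(q_1) \dotsm \rho_X(q_k), j}$ (no cancellation possible), while positivity forces every summand $\bigl(\prod_\ell x_{q_\ell p_\ell}\bigr)\nu_k(\sigma)_{\word p1k, j}$ to vanish. Either way, every transition of $N$ entering $J$ from $(P \setminus J)^k$ has weight $0$, so the states of $J$ are inaccessible in $f(N)$ and trimness gives $J = \emptyset$; the rows are handled dually via $I = \{q \mid \text{row $q$ of $X$ is zero}\}$ and $F = XG$. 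If you only want case~(ii), your approach can be repaired cleanly by using your dual identity entrywise: a zero row $q$ gives $(h_\mu^c F)_q = \sum_p x_{qp}(h_\nu^c G)_p = 0$ for every context $c$, contradicting co-accessibility of $q$ plus positivity, and dually $h_\nu(t)_p = \sum_q h_\mu(t)_q x_{qp}$ handles the columns. But no variant of the aggregated argument covers case~(i) over a general semiring; there the transition-level computation is essential.
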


\begin{proof}
  Let $M = (\Sigma, Q, \mu, F)$ and $N = (\Sigma, P, \nu, G)$.
  Moreover, let \[ J = \{ p \in P \mid \forall q \in Q \colon x_{qp} =
  0\} \enspace. \] Then $\nu_k(\sigma)_{w, j} = 0$ for every $\sigma
  \in \Sigma_k$, $w \in (P \setminus J)^k$, and $j \in J$. This is
  seen as follows.  Since $\mu_k(\sigma)X = X^{k, \mathord{\otimes}}
  \cdot \nu_k(\sigma)$ we obtain
  \begin{align}
    \label{eq:Trim1}
    \sum_{q \in Q} \mu_k(\sigma)_{\word q1k, q} \cdot x_{qj} = 0 &=
    \sum_{\seq p1k \in P} \Bigl( \prod_{\ell = 1}^k x_{q_\ell,
      p_\ell} \Bigr) \cdot \nu_k(\sigma)_{\word p1k, j}
  \end{align}
  for every $\seq q1k \in Q$ and $j \in J$.  If $X$~is functional, then
  \[ \sum_{\seq p1k \in P} \Bigl( \prod_{\ell = 1}^k x_{q_\ell,
    p_\ell} \Bigr) \cdot \nu_k(\sigma)_{\word p1k, j} =
  \nu_k(\sigma)_{\rho_X(q_1) \dotsm \rho_X(q_k), j} = 0 \enspace, \]
  which proves the claim.  On the other hand, if ${\cal A}$~is
  positive, then \eqref{eq:Trim1}~implies that $\prod_{\ell = 1}^k
  x_{q_\ell, p_\ell} \cdot \nu_k(\sigma)_{\word p1k, j} = 0$ for every
  $\seq p1k \in P$.  Since for every $p_\ell \notin J$, there exists
  $q_\ell$ such that $x_{q_\ell, p_\ell} \neq 0$ and $\prod_{\ell =
    1}^k x_{q_\ell, p_\ell} \neq 0$ by zero-divisor freeness, we
  conclude that $\nu_k(\sigma)_{\word p1k, j} = 0$ for every $\seq p1k
  \in P \setminus J$, which again proves the claim.  Consequently, all
  states of~$J$ are unreachable.  Since $N$ is trim, we conclude $J =
  \emptyset$, and thus, $X$~has no column of zeroes.

  If $X$~is functional, then it clearly has no row of zeroes.  To
  prove that $X$ has no row of zeroes in the remaining case, let $I =
  \{ q \in Q \mid \forall p \in P \colon x_{qp} = 0\}$.  Then $F_i =
  0$ and $\mu_k(\sigma)_{\word q1k, q} = 0$ for every $\sigma \in
  \Sigma_k$, $q \in Q \setminus I$, $\seq q1k \in Q$, and $i \in I$
  such that $q_\ell = i$ for some $\ell \in [k]$.  Clearly, $F_i =
  \sum_{p \in P} x_{ip} G_p = 0$ for every $i \in I$.  Moreover, since
  $\mu_k(\sigma)X = X^{k, \mathord{\otimes}} \cdot \nu_k(\sigma)$ we
  obtain
  \begin{align}
    \label{eq:Trim2}
    \sum_{q \in Q} \mu_k(\sigma)_{\word q1k, q} \cdot x_{qp} &=
    \sum_{\seq p1k \in P} \Bigl( \prod_{\ell = 1}^k x_{q_\ell,
      p_\ell} \Bigr) \cdot \nu_k(\sigma)_{\word p1k, p} = 0
  \end{align}
  for every $\seq q1k \in Q$, $p \in P$, and $i \in I$ such that
  $q_\ell = i$ for some $\ell \in [k]$.  Since ${\cal A}$ is positive,
  \eqref{eq:Trim2}~implies that $\mu_k(\sigma)_{\word q1k, q} \cdot
  x_{qp} = 0$ for every $q \in Q$.  However, for all $q \in Q
  \setminus I$, there exists $p \in P$ such that $x_{qp} \neq 0$
  because $q \notin I$.  Consequently, $\mu_k(\sigma)_{\word q1k, q} =
  0$ by zero-divisor freeness, which proves the claim.  Thus, all
  states of~$I$ are unreachable.  Since $M$ is trim, we conclude $I =
  \emptyset$, and thus, $X$~has no row of zeroes. \qed
\end{proof}

\begin{definition}[{\protect{see~\cite[Def.~1]{hogmalmay07d}}}]
  \label{df:ForSim}
  Let $M = (\Sigma, Q, \mu, F)$ and $N = (\Sigma, P, \nu, G)$ be
  wta. A surjective function~$\rho \colon Q \to P$ is a \emph{forward
    simulation} from~$M$ to~$N$ if
  \begin{itemize}
  \item[(i)] $F_q = G_{\rho(q)}$ for every $q \in Q$, and
  \item[(ii)] for every $p \in P$, $\sigma \in \Sigma_k$, and $\seq q1k
    \in Q$
    \[ \sum_{q \in Q \colon \rho(q) = p} \mu_k(\sigma)_{\word
      q1k, q} = \nu_k(\sigma)_{\rho(q_1) \dotsm \rho(q_k), p}
    \enspace. \]
  \end{itemize}
  Finally, we say that \emph{$M$~forward simulates~$N$}, written $M
  \twoheadrightarrow N$, if there exists a forward simulation from~$M$
  to~$N$.
\end{definition}

\begin{lemma}
  \label{lm:FSim}
  Let $M$~and~$N$ be wta such that $N$~is trim.  Then $M
  \twoheadrightarrow N$ if and only if there exists a functional
  transfer matrix~$X$ such that $M \stackrel X\to N$.
\end{lemma}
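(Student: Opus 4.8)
The statement is an ``if and only if'', so the plan is to treat the two directions separately, and in both directions the bridge between the two formalisms is the correspondence between a surjective function $\rho\colon Q\to P$ and its relational matrix $X\in A^{Q\times P}$ defined by $x_{qp}=1$ iff $\rho(q)=p$ (and $x_{qp}=0$ otherwise). For a functional and surjective $X$ of this shape one has the key algebraic identity $X^{k,\mathord\otimes}_{\word q1k,\word p1k}=\prod_{\ell=1}^k x_{q_\ell,p_\ell}$, which equals $1$ exactly when $p_\ell=\rho(q_\ell)$ for all $\ell$, and $0$ otherwise; hence for any vector/matrix $Y$ indexed by $P^k$ the product $(X^{k,\mathord\otimes}\cdot Y)_{\word q1k,\,\cdot}$ simply selects the row $\rho(q_1)\dotsm\rho(q_k)$ of $Y$. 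This observation turns condition~(ii) of Definition~\ref{df:Conj} into condition~(ii) of Definition~\ref{df:ForSim} essentially by inspection of a single matrix entry.

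For the direction ``$M\twoheadrightarrow N$ implies there is a functional $X$ with $M\stackrel X\to N$'': given a forward simulation $\rho$, let $X$ be its relational matrix. It is functional by construction. I would verify Definition~\ref{df:Conj}(i) entrywise: $(XG)_q=\sum_{p}x_{qp}G_p=G_{\rho(q)}=F_q$ using Definition~\ref{df:ForSim}(i). For Definition~\ref{df:Conj}(ii) I would compute, for fixed $\seq q1k\in Q$ and $p\in P$, the $(\word q1k,p)$-entry of both sides: the left-hand side is $\sum_{q\in Q}\mu_k(\sigma)_{\word q1k,q}x_{qp}=\sum_{q\colon\rho(q)=p}\mu_k(\sigma)_{\word q1k,q}$, and the right-hand side, by the selection property above, is $\nu_k(\sigma)_{\rho(q_1)\dotsm\rho(q_k),p}$; these agree precisely by Definition~\ref{df:ForSim}(ii). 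Note this direction does not use trimness of $N$.

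For the converse: suppose $X$ is functional and $M\stackrel X\to N$. Functionality gives a function $\rho\colon Q\to P$ with $x_{qp}=1$ iff $\rho(q)=p$. The one nontrivial point is \emph{surjectivity} of $\rho$, i.e.\ that $X$ has no zero column; this is exactly where trimness of $N$ enters, and it is supplied by Lemma~\ref{lm:Trim}(i) (the functional case), so $X$ is nondegenerate and in particular every column of $X$ is nonzero, meaning every $p\in P$ has a preimage under $\rho$. Once surjectivity is in hand, reading the entrywise computations of the previous paragraph backwards yields Definition~\ref{df:ForSim}(i) from $F=XG$ and Definition~\ref{df:ForSim}(ii) from $\mu_k(\sigma)X=X^{k,\mathord\otimes}\cdot\nu_k(\sigma)$, so $\rho$ is a forward simulation and $M\twoheadrightarrow N$.

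The only genuine obstacle is the surjectivity of $\rho$ in the converse direction, and that has already been isolated into Lemma~\ref{lm:Trim}; everything else is a routine entrywise unwinding of the two definitions via the Kronecker-product selection identity. I would therefore spend the bulk of the written proof making that identity explicit and then dispatch both directions as short entrywise checks.
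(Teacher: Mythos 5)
Your proposal is correct and follows essentially the same route as the paper: translate between the forward simulation $\rho$ and its relational matrix $X$, obtain surjectivity in the converse direction from Lemma~\ref{lm:Trim}, and match the two conditions of Definition~\ref{df:Conj} with those of Definition~\ref{df:ForSim} by the same entrywise computations (the paper likewise reduces both directions to the two displayed identities for $(\mu_k(\sigma)X)_{\word q1k,p}$ and $(X^{k,\mathord{\otimes}}\cdot\nu_k(\sigma))_{\word q1k,p}$). Nothing essential differs.
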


\begin{proof}
  Let $M = (\Sigma, Q, \mu, F)$ and $N = (\Sigma, P, \nu, G)$.  First
  suppose that $M \stackrel X\to N$ with functional~$X \in A^{Q \times
    P}$. Then $\rho_X \colon Q \to P$ is a surjective function by
  Lemma~\ref{lm:Trim}.  Conversely, if $M \twoheadrightarrow N$ with
  the forward simulation $\rho \colon Q \to P$, then $\rho$ induces a
  surjective functional matrix $X \in A^{Q \times P}$ such that
  $\rho_X = \rho$.

  Let $X \in A^{Q \times P}$ be a surjective, functional matrix.  It
  remains to prove that the conditions that (1)~$X$~is a transfer
  matrix and (2)~$\rho_X$ is a forward simulation are equivalent.  We
  discuss the two items of Definitions
  \ref{df:Conj}~and~\ref{df:ForSim} separately.
  \begin{itemize}
  \item[(i)] $F = XG$ if and only if $F_q = G_{\rho(q)}$ for every $q
    \in Q$.
  \item[(ii)] for every $\sigma \in \Sigma_k$, $\seq q1k \in Q$, and
    $p \in P$
    \begin{align*}
      (\mu_k(\sigma)X)_{\word q1k, p} &= \sum_{q \in Q \colon
        \rho_X(q) = p} \mu_k(\sigma)_{\word q1k, q} \\
      (X^{k, \mathord{\otimes}} \cdot \nu_k(\sigma))_{\word q1k, p}
      &= \nu_k(\sigma)_{\rho_X(q_1) \dotsm \rho_X(q_k), p} \enspace.
    \end{align*}
  \end{itemize}
  Thus, $X$~is a transfer matrix if and only if $\rho_X$~is a forward
  simulation, which proves the statement. \qed
\end{proof}

\begin{definition}[{\protect{see~\cite[Def.~16]{hogmalmay07d}}}]
  \label{df:BackSim}
  Let $M = (\Sigma, Q, \mu, F)$ and $N = (\Sigma, P, \nu, G)$ be
  wta.  A surjective function~$\rho \colon Q \to P$ is a \emph{backward
  simulation} from~$M$ to~$N$ if
\begin{itemize}
  \item[(i)] $\sum_{q \in Q \colon \rho(q) = p} F_q = G_p$ for every
    $p \in P$, and
  \item[(ii)] for every $q \in Q$, $\sigma \in \Sigma_k$, and $\seq p1k
    \in P$
    \[ \sum_{\substack{\seq q1k \in Q \\ \rho(q_1) = p_1, \dotsc,
        \rho(q_k) = p_k}} \mu_k(\sigma)_{\word q1k, q} =
    \nu_k(\sigma)_{\word p1k, \rho(q)} \enspace. \]
  \end{itemize}
  Finally, we say that \emph{$M$~backward simulates~$N$}, written $M
  \twoheadleftarrow N$, if there exists a backward simulation from~$M$
  to~$N$.
\end{definition}

\begin{lemma}
  \label{lm:BSim}
  Let $M$~and~$N$ be wta such that $N$ is trim.  Then $M
  \twoheadleftarrow N$ if and only if there exists a transfer
  matrix~$X$ such that $X^{\mathrm T}$ is functional and $N \stackrel
  X\to M$.
\end{lemma}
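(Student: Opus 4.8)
The plan is to mirror the argument of Lemma~\ref{lm:FSim}, but transposed: a backward simulation from~$M$ to~$N$ should correspond exactly to a transfer matrix for the reverse simulation $N \stackrel X\to M$ whose transpose is functional. First I would fix $M = (\Sigma, Q, \mu, F)$ and $N = (\Sigma, P, \nu, G)$ and observe that a surjective function $\rho \colon Q \to P$ induces a relational matrix $X \in A^{P \times Q}$ with $x_{pq} = 1$ iff $\rho(q) = p$; then $X^{\mathrm T} \in A^{Q \times P}$ is exactly the functional, surjective matrix associated with~$\rho$, and conversely any $X$ with $X^{\mathrm T}$ functional and (using that $N$ is trim, via Lemma~\ref{lm:Trim}, so that $X$ — equivalently $X^{\mathrm T}$ — is nondegenerate and hence $X^{\mathrm T}$ is surjective as a function $Q \to P$) arises this way from a surjection $\rho = \rho_{X^{\mathrm T}} \colon Q \to P$.

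With this correspondence in place, it remains to check that, for a fixed surjection $\rho$ with induced $X \in A^{P \times Q}$, the condition ``$N \stackrel X\to M$'' unpacks precisely into conditions (i) and~(ii) of Definition~\ref{df:BackSim}. For the final-weight condition, $G = XF$ reads $G_p = \sum_{q \in Q} x_{pq} F_q = \sum_{q \colon \rho(q) = p} F_q$, which is exactly Definition~\ref{df:BackSim}(i). For the transition condition, $\nu_k(\sigma) X = X^{k,\mathord\otimes} \cdot \mu_k(\sigma)$ is an identity of matrices indexed by $(P^k \times Q)$; I would compute both sides entrywise at $(\word p1k, q)$. The left-hand side gives $\sum_{q' \in Q} \nu_k(\sigma)_{\word p1k, q'} \, x_{q'q}$, wait — one must be careful with the index placement: $\nu_k(\sigma) \in A^{P^k \times P}$ and $X \in A^{P \times Q}$, so $(\nu_k(\sigma) X)_{\word p1k, q} = \sum_{p \in P} \nu_k(\sigma)_{\word p1k, p}\, x_{pq} = \nu_k(\sigma)_{\word p1k, \rho(q)}$ since the only nonzero $x_{pq}$ is at $p = \rho(q)$. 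The right-hand side is $(X^{k,\mathord\otimes} \cdot \mu_k(\sigma))_{\word p1k, q} = \sum_{\word q1k \in Q} \bigl(\prod_{\ell=1}^k x_{p_\ell, q_\ell}\bigr) \mu_k(\sigma)_{\word q1k, q}$, and the product $\prod_\ell x_{p_\ell, q_\ell}$ is $1$ exactly when $\rho(q_\ell) = p_\ell$ for all $\ell$ and $0$ otherwise; hence this sum equals $\sum_{\word q1k \in Q,\ \rho(q_1)=p_1,\dots,\rho(q_k)=p_k} \mu_k(\sigma)_{\word q1k, q}$. Equating the two expressions yields exactly Definition~\ref{df:BackSim}(ii).

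Putting the two directions together: if $M \twoheadleftarrow N$ via $\rho$, the induced $X$ satisfies (i) and (ii) of Definition~\ref{df:Conj} by the computations above, so $N \stackrel X\to M$ with $X^{\mathrm T}$ functional; conversely, if $N \stackrel X\to M$ with $X^{\mathrm T}$ functional and $N$ trim, then Lemma~\ref{lm:Trim} makes $X$ nondegenerate, so $X$ is the matrix of a genuine surjection $\rho \colon Q \to P$, and reversing the computations shows $\rho$ is a backward simulation. I expect the main obstacle to be purely bookkeeping: keeping the shapes of the matrices ($\nu_k(\sigma) \in A^{P^k \times P}$, $X \in A^{P \times Q}$, $X^{k,\mathord\otimes} \in A^{P^k \times Q^k}$) and the direction of the simulation straight, since here $X$ goes from~$P$ to~$Q$ while the simulation arrow $N \stackrel X\to M$ and the backward-simulation arrow $M \twoheadleftarrow N$ point in opposite-looking directions; there is no real analytic or algebraic difficulty beyond the nondegeneracy input from Lemma~\ref{lm:Trim}, which is exactly what forces the trimness hypothesis on~$N$.
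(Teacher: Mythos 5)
Your proposal is correct and follows essentially the same route as the paper's own proof: you set up the same correspondence between surjections $\rho \colon Q \to P$ and transfer matrices $X \in A^{P \times Q}$ with $X^{\mathrm T}$ functional (invoking Lemma~\ref{lm:Trim} together with the trimness of~$N$ to get surjectivity), and your entrywise evaluations of $G = XF$ and $\nu_k(\sigma)X = X^{k,\mathord{\otimes}} \cdot \mu_k(\sigma)$ against conditions (i) and~(ii) of Definition~\ref{df:BackSim} coincide with the paper's computation, up to writing $X$ where the paper writes $X^{\mathrm T}$. No gaps.
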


\begin{proof}
  Let $M = (\Sigma, Q, \mu, F)$ and $N = (\Sigma, P, \nu, G)$.  First,
  suppose that $N \stackrel X\to M$ with the transfer matrix~$X \in
  A^{P \times Q}$ such that $X^{\mathrm T}$~is functional.  Let $Y =
  X^{\mathrm T}$. Then $\rho_Y \colon Q \to P$ is a surjective
  function by Lemma~\ref{lm:Trim}.  Conversely, if $M
  \twoheadleftarrow N$ with the backward simulation $\rho \colon Q \to
  P$, then $\rho$~again induces a surjective, functional matrix $X \in
  A^{Q \times P}$ such that $\rho_X = \rho$.

  Let $X \in A^{Q \times P}$ be a surjective, functional matrix.  It
  remains to prove that the conditions that (1)~$X^{\mathrm T}$~is a
  transfer matrix and (2)~$\rho_X$ is a backward simulation are
  equivalent.  We discuss the two items of Definitions
  \ref{df:Conj}~and~\ref{df:BackSim} separately.
  \begin{itemize}
  \item[(i)] $G = X^{\mathrm T}F$ if and only if $G_p = \sum_{q \in
      Q \colon \rho_X(q) = p} F_q$ for every $p \in P$.
  \item[(ii)] for every $\sigma \in \Sigma_k$, $\seq p1k \in P$, and
    $q \in Q$
    \begin{align*}
      (\nu_k(\sigma)X^{\mathrm T})_{\word p1k, q} &=
      \nu_k(\sigma)_{\word p1k, \rho_X(q)} \\
      ((X^{\mathrm T})^{k, \mathord{\otimes}} \cdot
      \mu_k(\sigma))_{\word p1k, q} &= \sum_{\substack{\seq q1k \in
          Q \\ \rho_X(q_1) = p_1, \dotsc, \rho_X(q_k) = p_k}}
      \mu_k(\sigma)_{\word q1k, q} \enspace.
    \end{align*}
  \end{itemize}
  Thus, $X^{\mathrm T}$~is a transfer matrix if and only if
  $\rho_X$~is a backward simulation, which proves the statement. \qed
\end{proof}

\begin{lemma}
  \label{lm:Help2}
  If $A = \langle U \rangle_{\mathord{+}}$, then for every $X \in A^{Q
    \times P}$ there exist matrices $C, E, D$ such that 
  \begin{itemize}
  \item $X = CED$,
  \item $C^{\mathrm T}$~and~$D$ are functional, and
  \item $E$~is an invertible diagonal matrix.
  \end{itemize}
  If (i)~$X$ is nondegenerate or (ii)~${\cal A}$ has (nontrivial)
  zero-sums, then $C^{\mathrm T}$~and~$D$ can be chosen to be
  surjective.
\end{lemma}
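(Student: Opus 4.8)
The plan is to factor $X$ through an index set $K$ that lists, with multiplicity, a fixed decomposition of every entry of $X$ into units; the matrices $C$ and $D$ merely record which row, respectively which column, each listed unit belongs to, while $E$ carries the units themselves on its diagonal.

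Concretely, using the hypothesis $A = \langle U \rangle_{\mathord+}$, for each $q \in Q$ and $p \in P$ fix a representation $x_{q,p} = \sum_{i=1}^{n_{q,p}} u_{q,p,i}$ with $n_{q,p} \in \nat$ and $u_{q,p,i} \in U$, taking the empty sum $n_{q,p} = 0$ whenever $x_{q,p} = 0$. Put $K = \{ (q,p,i) \mid q \in Q,\ p \in P,\ i \in [n_{q,p}] \}$, and define $C \in A^{Q \times K}$, $D \in A^{K \times P}$, and the diagonal matrix $E \in A^{K \times K}$ by setting $c_{q',(q,p,i)} = 1$ exactly when $q' = q$, $d_{(q,p,i),p'} = 1$ exactly when $p' = p$, and $e_{(q,p,i),(q,p,i)} = u_{q,p,i}$. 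Then $C^{\mathrm T}$ and $D$ are the relational matrices of the projections $(q,p,i) \mapsto q$ and $(q,p,i) \mapsto p$, hence functional; $E$ is invertible, its inverse being the diagonal matrix of the units $u_{q,p,i}^{-1}$; and since $E$ is diagonal one computes
\[ (CED)_{q',p'} = \sum_{(q,p,i) \in K} c_{q',(q,p,i)} \cdot e_{(q,p,i),(q,p,i)} \cdot d_{(q,p,i),p'} = \sum_{i=1}^{n_{q',p'}} u_{q',p',i} = x_{q',p'} \]
for all $q' \in Q$ and $p' \in P$. Hence $X = CED$, which settles the first part.

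For the addendum, note that $C^{\mathrm T}$, respectively $D$, is surjective precisely when the projection $K \to Q$, respectively $K \to P$, is onto; that is, when every $q \in Q$ has some $p \in P$ with $n_{q,p} \geq 1$, and likewise with the roles of $Q$ and $P$ exchanged. Under~(i), if $X$ is nondegenerate then every row and every column of $X$ contains a nonzero entry, and a nonzero entry necessarily has a nonempty unit-representation; hence both projections are onto. Under~(ii), one first converts the failure of zero-sum freeness into a nonempty vanishing sum of units: choose $a, b \in A \setminus \{0\}$ with $a + b = 0$ and expand each as a sum of units to obtain $w_1, \dotsc, w_m \in U$ with $m \geq 1$ and $w_1 + \dotsb + w_m = 0$. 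Replacing the chosen representation of each zero entry of $X$ by the nonempty representation $w_1 + \dotsb + w_m$ does not disturb the identity $X = CED$, since the inserted units sum to $0$, but now every entry has $n_{q,p} \geq 1$, so both projections are onto.

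The whole argument is a construction followed by a one-line matrix computation, so I do not anticipate a genuine obstacle; the only step requiring a real idea is the reduction in case~(ii) --- extracting a nonempty vanishing sum of units from the failure of zero-sum freeness --- which is precisely the device that lets the zero entries of $X$ be padded without disturbing the factorization, and which, as case~(i) reflects, is unavailable when $\mathcal A$ is zero-sum free.
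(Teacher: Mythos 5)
Your construction is the same as the paper's: index the factorization by triples $(q,i,p)$ recording a fixed unit-decomposition of each entry, let $C^{\mathrm T}$ and $D$ be the row- and column-projections, put the units on the diagonal of $E$, and observe that nondegeneracy (resp.\ padding zero entries with a vanishing sum of units obtained from a nontrivial zero-sum) makes the projections onto. Your case~(ii) is in fact stated slightly more carefully than the paper's (which asserts a vanishing sum of exactly two units, whereas expanding $a+b=0$ only guarantees a nonempty vanishing sum of units), but the argument is essentially identical.
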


\begin{proof}
  For every $q \in Q$ and $p \in P$, let $\ell_{qp} \in \nat$ and
  $u_{qp1}, \dotsc, u_{qp\ell_{qp}} \in U$ be such that $x_{qp} =
  \sum_{i = 1}^{\ell_{qp}} u_{qpi}$.  In addition, let
  \[ J = \{ (q, i, p) \mid q \in Q, p \in P, i \in [\ell_{qp}] \}
  \enspace. \] Finally, let $\pi_1 \colon J \to Q$ and $\pi_3 \colon J
  \to P$ be such that $\pi_1(\langle q, i, p\rangle) = q$ and
  $\pi_3(\langle q, i, p\rangle) = p$ for every $\langle q, i,
  p\rangle \in J$.  Then we set $C^{\mathrm T}$~and~$D$ to the
  functional matrices represented by $\pi_1$~and~$\pi_3$,
  respectively.  Together with the diagonal matrix~$E$ such that
  $e_{\langle q, i, p\rangle, \langle q, i, p\rangle} = u_{qpi}$ for
  every $\langle q, i, p\rangle \in J$, we obtain $X = CED$.  For
  every $q \in Q$ and $p \in P$ we have
  \[ \sum_{j_1, j_2 \in J} c_{q, j_1} e_{j_1, j_2} d_{j_2, p} =
  \sum_{i = 1}^{\ell_{qp}} e_{\langle q, i, p\rangle, \langle q,
    i, p\rangle} = \sum_{i = 1}^{\ell_{qp}} u_{qpi} = x_{qp}
  \enspace. \]

  It is clear that $C^{\mathrm T}$~and~$D$ are functional matrices.
  Moreover, $E$~is an invertible diagonal matrix because $EE^{-1} = I
  = E^{-1}E$ where $E^{-1}$ is the matrix obtained from~$E$ by
  inverting each nonzero element.  If $X$~is nondegenerate, then
  $C^{\mathrm T}$~and~$D$ are surjective.  Finally, if there are
  zero-sums, then for every $q \in Q$ and $p \in P$ there exist $u, v
  \in U$ such that $x_{qp} = 0 = u + v$, which yields that we can
  choose $\ell_{qp} > 0$.  This completes the proof. \qed
\end{proof}

\begin{lemma}
  \label{lm:Sol}
  Let ${\cal A}$~be equisubtractive.  Moreover, let $R \in A^Q$ and $C
  \in A^P$ be such that $\sum_{q \in Q} r_q = \sum_{p \in P} c_p$.
  Then there exists a matrix $X \in A^{Q \times P}$ with row sums~$R$
  and column sums~$C$; i.e., $\sum_{q \in Q} x_{qp} = c_p$ for every
  $p \in P$ and $\sum_{p \in P} x_{qp} = r_q$ for every $q \in Q$.
\end{lemma}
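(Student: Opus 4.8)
The plan is to argue by induction on $\abs Q$, treating $Q$ and $P$ as nonempty (which is the intended reading). If $\abs Q = 1$, say $Q = \{q\}$, then $r_q = \sum_{p \in P} c_p$ by hypothesis, so setting $x_{qp} = c_p$ for every $p \in P$ gives a matrix whose unique row sum is $r_q$ and whose column sums are exactly $C$. For the inductive step with $\abs Q \geq 2$, I would fix a state $q_0 \in Q$, put $Q' = Q \setminus \{q_0\}$, and reduce to two tasks: first choose the single row $(x_{q_0 p})_{p \in P}$ together with a residual column vector $C' = (c'_p)_{p \in P}$ such that $c_p = x_{q_0 p} + c'_p$ for all $p$, that $\sum_{p \in P} x_{q_0 p} = r_{q_0}$, and that $\sum_{p \in P} c'_p = \sum_{q \in Q'} r_q$; then apply the induction hypothesis to $Q'$, $P$, the restriction of $R$ to $Q'$, and $C'$ to obtain the remaining rows. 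Concatenating the chosen row with this smaller matrix, a direct check shows the resulting $X \in A^{Q \times P}$ has row sums $R$ and column sums $C$.

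The heart of the argument is therefore the ``peeling'' claim: if $a + b = \sum_{p \in P} c_p$ for some $a, b \in A$, then there are families $(x_p)_{p \in P}$ and $(c'_p)_{p \in P}$ over $A$ with $c_p = x_p + c'_p$ for all $p$, with $\sum_{p \in P} x_p = a$, and with $\sum_{p \in P} c'_p = b$; this is exactly Lemma~\ref{lm:Sol} in the case $\abs Q = 2$. I would establish it by a secondary induction on $\abs P$. When $\abs P = 1$ we just take $x_p = a$ and $c'_p = b$. When $\abs P \geq 2$, pick $p_0 \in P$ and apply equisubtractivity to $c_{p_0} + \bigl(\sum_{p \in P \setminus \{p_0\}} c_p\bigr) = a + b$: this yields elements $\alpha_1, \alpha_2, \beta_1, \beta_2 \in A$ with $c_{p_0} = \alpha_1 + \beta_1$, with $\sum_{p \neq p_0} c_p = \alpha_2 + \beta_2$, with $a = \alpha_1 + \alpha_2$, and with $b = \beta_1 + \beta_2$. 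Setting $x_{p_0} = \alpha_1$ and $c'_{p_0} = \beta_1$ and invoking the inner induction hypothesis on $P \setminus \{p_0\}$ with the right-hand side split as $\alpha_2 + \beta_2$ produces the remaining entries, and summing gives $\sum_{p \in P} x_p = \alpha_1 + \alpha_2 = a$ and $\sum_{p \in P} c'_p = \beta_1 + \beta_2 = b$.

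I expect the only real subtlety to be this iterated use of equisubtractivity: each application decomposes a sum of two terms across a ``grid'', and the content of the peeling claim is that processing one column at a time lets these decompositions accumulate coherently without ever invoking subtraction. Everything else is bookkeeping: commutativity of $+$ is used freely when regrouping sums, the degenerate cases $\abs Q = 1$ and $\abs P = 1$ are immediate, and verifying that the assembled matrix has the prescribed margins is a one-line computation at each level of the recursion. (Note also that rings are equisubtractive, so the lemma in particular recovers the familiar solvability of a transportation-type system with consistent margins.)
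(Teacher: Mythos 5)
Your proof is correct, and it rests on the same two ingredients as the paper's own argument: an induction that removes states one at a time, and an iterated application of equisubtractivity to distribute a two-term sum coherently over a family. The organization differs slightly. The paper peels off one row \emph{and} one column per step: a single application of equisubtractivity to $\sum_{q \in Q'} r_q + r_i = \sum_{p \in P'} c_p + c_j$ produces the corner entry $x_{ij}$ together with residual row, column, and ``bulk'' totals, after which a row decomposition and a column decomposition distribute those residuals, and the recursion continues on $Q' \times P'$. You instead peel off a single full row and recurse on $Q' \times P$, which needs only one distribution step per level. Your version is arguably more complete on one point: the paper's phrase ``continuing the row decomposition'' silently invokes exactly your peeling claim --- that $a + b = \sum_{p} c_p$ can be refined to entrywise splits $c_p = x_p + c'_p$ with $\sum_p x_p = a$ and $\sum_p c'_p = b$ --- and you supply the explicit inner induction on $\abs P$ that justifies it. The two arguments are otherwise interchangeable, and neither is more general than the other.
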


\begin{proof}
  If $\abs Q \leq 1$ or $\abs P \leq 1$, then the statement is
  trivially true.  Otherwise, select $i \in Q$ and $j \in P$, and let
  $Q' = Q \setminus \{i\}$ and $P' = P \setminus \{j\}$.  By assumption
  \[ \sum_{q \in Q'} r_q + r_i = \sum_{p \in P'} c_p + c_j
  \enspace. \] Thus, by equisubtractivity there exist $a, c'_j, r'_i,
  x_{ij} \in A$ such that \[ \sum_{q \in Q'} r_q = a + c'_j \qquad r_i
  = r'_i + x_{ij} \qquad \sum_{p \in P'} c_p = a + r'_i \qquad c_j =
  c'_j + x_{ij} \enspace. \] Continuing the row decomposition, we
  obtain $Y \in A^{Q'}$ and $R' \in A^{Q'}$ such that $r_q = r'_q +
  y_q$ for every $q \in Q'$ and $\sum_{q \in Q'} r'_q = a$.  In a
  similar manner we perform column decomposition to obtain $Y' \in
  A^{P'}$ and $C' \in A^{P'}$ such that $c_p = c'_p + y'_p$ for every
  $p \in P'$ and $\sum_{p \in P'} c'_p = a$.  Thus, by the induction
  hypothesis, there exists a matrix $X' \in A^{Q' \times P'}$ with row
  sums~$R'$ and column sums~$C'$ because $\sum_{q \in Q'} r'_q =
  \sum_{p \in P'} c'_p$.  Then the matrix \[ X =
  \begin{pmatrix}
    \; & & \; & \\
    & X' & & Y \\
    & & & \\
    & (Y')^{\mathrm T} & & x_{ij}
  \end{pmatrix}
  \] obviously has the required row and column sums $R$~and~$C$,
  respectively. \qed
\end{proof}

\begin{lemma}
  \label{lm:3}
  If $X \in A^{Q \times P}$~is functional (respectively, invertible
  diagonal), then $X^{k, \mathord{\otimes}}$ is functional
  (respectively, invertible diagonal) for every $k \in \nat$.
\end{lemma}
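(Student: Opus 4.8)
The plan is to prove both assertions by induction on~$k$, using the recursive definition $X^{0,\mathord\otimes} = (1)$ and $X^{i+1,\mathord\otimes} = X^{i,\mathord\otimes} \otimes X$. The base case $k = 0$ is immediate since $(1) \in A^{[1]}$ is both functional (it represents the unique function $[1] \to [1]$) and an invertible diagonal matrix (it equals the $1\times1$ unit matrix). So the real content is the induction step, where I would show that the \textsc{Kronecker} product of two functional matrices is functional, and the \textsc{Kronecker} product of two invertible diagonal matrices is invertible diagonal; then applying this with the matrices $X^{k,\mathord\otimes}$ and $X$ closes the induction.

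For the functional case, recall $(Y \otimes X)_{(i_1,i_2),(j_1,j_2)} = y_{i_1,j_1}\, x_{i_2,j_2}$. If $Y$ and $X$ are relational with entries in~$\{0,1\}$, then each such product lies in~$\{0,1\}$, so $Y \otimes X$ is relational; and $(Y\otimes X)_{(i_1,i_2),(j_1,j_2)} = 1$ iff $y_{i_1,j_1} = 1$ and $x_{i_2,j_2} = 1$, i.e.\ iff $\rho_Y(i_1) = j_1$ and $\rho_X(i_2) = j_2$. Hence for each pair $(i_1,i_2)$ there is exactly one pair $(j_1,j_2)$ with entry~$1$, namely $(\rho_Y(i_1),\rho_X(i_2))$, which is precisely functionality of $Y \otimes X$ (the associated function is $\rho_Y \times \rho_X$). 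For the diagonal case, if $Y$ and $X$ are diagonal then $(Y\otimes X)_{(i_1,i_2),(j_1,j_2)} = y_{i_1,j_1} x_{i_2,j_2}$ vanishes unless $i_1 = j_1$ and $i_2 = j_2$, so $Y \otimes X$ is diagonal with diagonal entries $y_{ii} x_{jj}$.

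For invertibility in the diagonal case, I would invoke the mixed-product identity $(Y \otimes X)(Y' \otimes X') = (YY') \otimes (XX')$, or simply observe directly that the diagonal matrix $D$ with $d_{(i,j),(i,j)} = y_{ii}^{-1} x_{jj}^{-1}$ satisfies $(Y\otimes X)D = I = D(Y\otimes X)$, using that each $y_{ii}$ and each $x_{jj}$ is a unit (being a nonzero diagonal entry of an invertible diagonal matrix, whose inverse is obtained by inverting the diagonal entries, as noted in the proof of Lemma~\ref{lm:Help2}). The one point requiring a little care is that ``invertible diagonal'' should be read as ``diagonal with all diagonal entries units'' — one should check these two formulations agree, which is routine — and that we only take the \textsc{Kronecker} product of a square diagonal matrix with another square diagonal matrix, so index sets match up and the products $y_{ii}x_{jj}$ are genuinely the diagonal entries. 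No step here is a serious obstacle; the mild bookkeeping around index sets and the equivalence of the two descriptions of invertible diagonal matrices is the only thing to watch.
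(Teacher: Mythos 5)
Your proof is correct and simply fills in the routine verification that the paper dismisses with the single word ``Trivial'': induction on $k$ using the fact that the \textsc{Kronecker} product of two functional (respectively, invertible diagonal) matrices is again functional (respectively, invertible diagonal), which is exactly the intended argument. One small nit: since the semiring is not assumed commutative in this section, the inverse of the diagonal entry $y_{ii}x_{jj}$ is $x_{jj}^{-1}y_{ii}^{-1}$ rather than $y_{ii}^{-1}x_{jj}^{-1}$ (and the mixed-product identity you invoke as an alternative also needs commutativity), but this does not affect the conclusion that a product of units is a unit.
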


\begin{proof}
  Trivial. \qed
\end{proof}

\begin{theorem}
  \label{thm:2}
  Let $M$~and~$N$ be wta and ${\cal A}$~be equisubtractive with $A =
  \langle U \rangle_{\mathord{+}}$.  Then $M \stackrel X\to N$ if and
  only if there exist two wta $M'$~and~$N'$ such that
  \begin{itemize}
  \item $M \stackrel C\to M'$ where $C^{\mathrm T}$~is functional,
  \item $M' \stackrel E\to N'$ where $E$~is an invertible diagonal
    matrix, and
  \item $N' \stackrel D\to N$ where $D$~is functional.
  \end{itemize}
  If $M$~and~$N$ are trim, then $M' \twoheadleftarrow M$ and $N'
  \twoheadrightarrow N$.
\end{theorem}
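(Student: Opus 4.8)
The plan is to prove the two implications and the trim addendum in turn. The \emph{if} direction is immediate: composing $M \stackrel C\to M' \stackrel E\to N' \stackrel D\to N$ gives $M \stackrel{CED}\to N$, so one may take $X = CED$.

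For the \emph{only if} direction, let $M = (\Sigma, Q, \mu, F)$, $N = (\Sigma, P, \nu, G)$, and $M \stackrel X\to N$. Since $A = \langle U\rangle_{\mathord+}$, Lemma~\ref{lm:Help2} supplies $X = CED$; writing $J$ for the index set of that proof and $x_{qp} = \sum_{i=1}^{\ell_{qp}} u_{qpi}$ with $u_{qpi} \in U$, the matrix $C^{\mathrm T}$ is the functional matrix of the projection $\langle q,i,p\rangle \mapsto q$, the matrix $D$ is the functional matrix of $\langle q,i,p\rangle \mapsto p$, and $E$ is invertible diagonal with diagonal entries $u_{qpi}$. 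I would give both $M'$ and $N'$ the state set~$J$ and obtain $N'$ from $M'$ by the invertible diagonal ``change of basis''~$E$, i.e.\ $G' = E^{-1}F'$ and $\nu'_k(\sigma) = (E^{k,\mathord{\otimes}})^{-1}\mu'_k(\sigma)\,E$, which is legitimate because $E^{k,\mathord{\otimes}}$ is invertible by Lemma~\ref{lm:3}. Then $M' \stackrel E\to N'$ holds by construction; moreover $G'_{\langle q,i,p\rangle} = G_p$, hence $G' = DG$ and condition~(i) of $N' \stackrel D\to N$ holds automatically. What is left is to choose $F'$ and $\mu'$ so that $M \stackrel C\to M'$ holds and condition~(ii) of $N' \stackrel D\to N$, which rewrites to $\mu'_k(\sigma)\,ED = (ED)^{k,\mathord{\otimes}}\,\nu_k(\sigma)$, holds.

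Set $F'_{\langle q,i,p\rangle} = u_{qpi}\,G_p$; then $CF' = F$ follows from $F = XG$. For the transitions, fix $\sigma \in \Sigma_k$ and expand the two required identities $\mu_k(\sigma)\,C = C^{k,\mathord{\otimes}}\,\mu'_k(\sigma)$ and $\mu'_k(\sigma)\,ED = (ED)^{k,\mathord{\otimes}}\,\nu_k(\sigma)$ entrywise. Because $C$ has exactly one $1$ per column and $D$ exactly one $1$ per row, both identities split into independent systems, one for each pair $(q_1 \dotsm q_k, p) \in Q^k \times P$: the first requires that, for every target $\langle q, i, p\rangle$, the entries of $\mu'_k(\sigma)$ summed over the rows $\langle q_1, i_1, p_1\rangle \dotsm \langle q_k, i_k, p_k\rangle$ with the $q_\ell$ fixed equal $\mu_k(\sigma)_{q_1 \dotsm q_k, q}$; the second requires that, for every such row, the entries summed over the targets with third component~$p$, each multiplied on the right by its $u_{qpi}$, equal $\bigl(\prod_{\ell=1}^{k} u_{q_\ell p_\ell i_\ell}\bigr)\,\nu_k(\sigma)_{p_1 \dotsm p_k, p}$. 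Multiplying each column by its unit $u_{qpi}$ recasts both conditions as prescribing the ordinary row and column sums of a single matrix per block, which Lemma~\ref{lm:Sol} — applicable since $\mathcal A$ is equisubtractive — provides; undoing this substitution reassembles $\mu'_k(\sigma)$.

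The crux, and the one genuine computation, is verifying the hypothesis of Lemma~\ref{lm:Sol}: that in each block the prescribed column sums and row sums have equal total. The column sums total $\sum_{q} \mu_k(\sigma)_{q_1 \dotsm q_k, q}\,x_{qp}$ (using $\sum_i u_{qpi} = x_{qp}$), while the row sums total $(X^{k,\mathord{\otimes}}\,\nu_k(\sigma))_{q_1 \dotsm q_k, p}$, after pushing the sums over the $i$-indices inside the products by iterated distributivity, $\sum_{i_1, \dotsc, i_k}\prod_{\ell} u_{q_\ell p_\ell i_\ell} = \prod_{\ell} x_{q_\ell p_\ell}$; these agree precisely because $\mu_k(\sigma)\,X = X^{k,\mathord{\otimes}}\,\nu_k(\sigma)$. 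One wrinkle is that when $C^{\mathrm T}$ or $D$ fails to be surjective some blocks have no rows or no columns, so the corresponding instances of the two identities must be checked separately; there the relevant row or column of~$X$ is zero, which forces the corresponding entries of $\mu_k(\sigma)$, resp.\ $\nu_k(\sigma)$, to vanish — trivially if $\mathcal A$ has nontrivial zero-sums (then take the surjective variant of $C^{\mathrm T}, D$ from Lemma~\ref{lm:Help2}), and otherwise because a zero-sum free $\mathcal A$ with $A = \langle U\rangle_{\mathord+}$ is necessarily zero-divisor free. Finally the addendum is immediate: if $M$ and $N$ are trim, then $M \stackrel C\to M'$ with $C^{\mathrm T}$ functional and $M$ trim gives $M' \twoheadleftarrow M$ by Lemma~\ref{lm:BSim}, and $N' \stackrel D\to N$ with $D$ functional and $N$ trim gives $N' \twoheadrightarrow N$ by Lemma~\ref{lm:FSim}. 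I expect the block decomposition of $\mu'$ and the distributivity bookkeeping in the marginal check to be the main obstacle; the rest is routine verification.
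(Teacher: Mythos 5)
Your proof is correct and follows essentially the same route as the paper's: decompose $X = CED$ via Lemma~\ref{lm:Help2}, set $F' = EDG$ and $G' = DG$, and build the transition matrices blockwise via Lemma~\ref{lm:Sol}, with compatibility of the prescribed row and column sums coming from $\mu_k(\sigma)X = X^{k,\mathord{\otimes}}\cdot\nu_k(\sigma)$ (your $Y$-free formulation with the unit rescaling of columns is exactly the paper's auxiliary matrix $Y = \mu'_k(\sigma)E$ in disguise). Your explicit treatment of the empty blocks that arise when $C^{\mathrm T}$ or $D$ fails to be surjective is a point where you are more careful than the paper, which passes over this case silently; your resolution (take the surjective variant when there are nontrivial zero-sums, and otherwise use that zero-sum freeness together with $A = \langle U\rangle_{\mathord{+}}$ forces positivity) is the right one.
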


\begin{proof}
  Clearly, $M \stackrel C\to M' \stackrel E\to N' \stackrel D\to N$,
  which proves that $M \stackrel{CED}\longrightarrow N$.  For the
  converse, let $M = (\Sigma, Q, \mu, F)$ and $N = (\Sigma, P, \nu,
  G)$. Lemma~\ref{lm:Help2} shows that there exist matrices $C, E, D$
  such that
  \begin{itemize}
  \item $X = CED$,
  \item $C^{\mathrm T}$~and~$D$ are functional matrices, and
  \item $E \in A^{I \times I}$~is an invertible diagonal matrix.
  \end{itemize}
  Finally, let $\varphi \colon I \to Q$ and $\psi \colon I \to P$ be
  the functions associated to $C^{\mathrm T}$~and~$D$.  It remains to
  determine the wta $M'$~and~$N'$.  We construct $M' = (\Sigma, I,
  \mu', F')$ and $N' = (\Sigma, I, \nu', G')$ with
  \begin{itemize}
  \item $G' = DG$ and
  \item $F' = EDG$.
  \end{itemize}
  Then $CF' = CEDG = XG = F$.  Thus, it remains to specify
  $\mu'_k(\sigma)$~and~$\nu'_k(\sigma)$ for every $\sigma \in
  \Sigma_k$.  To this end, we determine a matrix $Y \in A^{I^k \times
    I}$ such that
  \begin{align}
    \label{eq:2a}
    C^{k, \mathord{\otimes}} \cdot Y &= \mu_k(\sigma) CE \\
    \label{eq:2b}
    YD &= E^{k, \mathord{\otimes}} \cdot D^{k, \mathord{\otimes}}
    \cdot \nu_k(\sigma) \enspace.
  \end{align}
  Given such a matrix~$Y$, we then let $\mu'_k(\sigma) = YE^{-1}$ and
  $\nu'_k(\sigma) = (E^{k, \mathord{\otimes}})^{-1} \cdot Y$.  Then
  \begin{align*}
    \mu_k(\sigma) C &= C^{k, \mathord{\otimes}} \cdot \mu'_k(\sigma)
    &\quad \mu'_k(\sigma) E &= E^{k, \mathord{\otimes}} \cdot
    \nu'_k(\sigma) &\quad \nu'_k(\sigma) D &= D^{k,
      \mathord{\otimes}} \cdot \nu_k(\sigma) \enspace.
  \end{align*}
  These equalities are displayed in Fig.~\ref{fig:Squares}.

  Finally, we need to specify the matrix~$Y$.  For every $q \in Q$ and
  $p \in P$, let $I_q = \varphi^{-1}(q)$ and $J_p = \psi^{-1}(p)$.
  Obviously, $Y$~can be decomposed into disjoint (not necessarily
  contiguous) submatrices $Y_{\word q1k, p} \in A^{(I_{q_1} \times
    \dotsm \times I_{q_k}) \times J_p}$ with $\seq q1k \in Q$ and $p
  \in P$.  Then \eqref{eq:2a}~and~\eqref{eq:2b} hold if and only if
  for every $\seq q1k \in Q$ and $p \in P$ the following two
  conditions hold:
  \begin{enumerate}
  \item For every $i \in I$ such that $\psi(i) = p$, the sum of the
    $i$-column of~$Y_{\word q1k, p}$ is $\mu_k(\sigma)_{\word q1k,
      \varphi(i)} \cdot e_{i,i}$.
  \item For all $\seq i1k \in I$ such that $\varphi(i_j) = q_j$ for
    every $j \in [k]$, the sum of the $(\seq i1k)$-row of~$Y_{\word
      q1k, p}$ is $\prod_{j = 1}^k e_{i_j, i_j} \cdot
    \nu_k(\sigma)_{\psi(i_1) \dotsm \psi(i_k), p}$.
  \end{enumerate}
  Those two conditions are compatible because
  \begin{align*}
    &\phantom{{}={}} \sum_{\substack{i \in I \\ \psi(i) = p}}
    \mu_k(\sigma)_{\word q1k, \varphi(i)} \cdot e_{i,i} = \bigl(
    \mu_k(\sigma)CED \bigr)_{\word q1k, p} = \bigl( \mu_k(\sigma)X
    \bigr)_{\word q1k, p} \\
    &\stackrel\dagger= \bigl( X^{k, \mathord{\otimes}} \cdot
    \nu_k(\sigma) \bigr)_{\word q1k, p} = \bigl( C^{k,
      \mathord{\otimes}} \cdot E^{k, \mathord{\otimes}} \cdot D^{k,
      \mathord{\otimes}} \cdot
    \nu_k(\sigma) \bigr)_{\word q1k, p} \\
    &= \sum_{\substack{\seq i1k \in I \\ \forall j \in [k] \colon
        \varphi(i_j) = q_j}} \Bigl( \prod_{j = 1}^k e_{i_j, i_j}
    \Bigr) \cdot \nu_k(\sigma)_{\psi(i_1) \dotsm \psi(i_k), p}
    \enspace.
  \end{align*}
  Consequently, the row and column sums of the submatrices~$Y_{\word
    q1k, p}$ are consistent, which yields that we can determine all
  the submatrices (and thus the whole matrix) by Lemma~\ref{lm:Sol}.

  If $M$~and~$N$ are trim, then either
  \begin{itemize}
  \item[(a)] ${\cal A}$~is zero-sum free (and thus positive because it
    is additively generated by its units), in which case $X$~is
    nondegenerate by Lemma~\ref{lm:Trim}, or
  \item[(b)] ${\cal A}$~has nontrivial zero-sums.
  \end{itemize}
  In both cases, Lemma~\ref{lm:Help2} shows that the matrices
  $C^{\mathrm T}$~and~$D$ are surjective, which yields the additional
  statement by Lemmata \ref{lm:FSim}~and~\ref{lm:BSim}. \qed
\end{proof}

\begin{figure}
  \centering
  \includegraphics{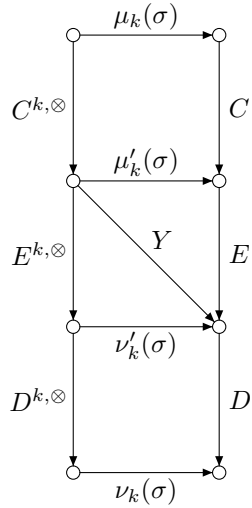}
  \caption{Illustration of the relations between the matrices in the
    proof of Theorem~\protect{\ref{thm:2}}.}
  \label{fig:Squares}
\end{figure}

\section{Category of simulations}
\label{sec:Category}
In this section our aim is to show that several well-known
constructions of wta are \emph{functorial}: they may be extended to
simulations in a functorial way.  Below we will only deal with the
sum, \textsc{Hadamard} product, $\sigma_0$-product, and
$\sigma_0$-iteration (cf.~\cite{esi10}).  Scalar OI-substition,
${}^\dagger$~\cite{bloesi03}, homomorphism, quotient, and
top-concatenation~\cite{esi10} may be covered in a similar fashion.

Throughout this section, let $\mathcal{A}$~be commutative.  Let $M =
(\Sigma, Q, \mu, F)$, $M' = (\Sigma, Q', \mu', F')$, and $M'' =
(\Sigma, Q'', \mu'', F'')$ be wta.  We already remarked that, if $M
\stackrel X\to M'$ and $M' \stackrel Y\to M''$, then $M
\stackrel{XY}\to M''$.  Moreover, $M \stackrel I\to M$ with the unit
matrix~$I \in A^{Q \times Q}$.  Thus, wta over the alphabet~$\Sigma$
form a category~$\text{\textbf{Sim}}_\Sigma$.

In the following, let $M = (\Sigma, Q, \mu, F)$ and $N = (\Sigma, P,
\nu, G)$ be wta such that $Q \cap P = \emptyset$.

\begin{definition} 
  The sum $M + N$ of $M$~and~$N$ is the wta $(\Sigma, Q \cup P,
  \kappa, H)$ where $H = \langle F, G \rangle = \begin{pmatrix} F \\
    G \end{pmatrix}$ and
  \[ \kappa_k(\sigma)_{\word q1k, q} = (\mu_k(\sigma) +
  \nu_k(\sigma))_{\word q1k, q} =
  \begin{cases}
    \mu_k(\sigma)_{\word q1k, q} & \text{if } q, \seq q1k \in Q \\
    \nu_k(\sigma)_{\word q1k, q} & \text{if } q, \seq q1k \in P \\
    0 & \text{otherwise.}
  \end{cases}
  \]
  for all $\sigma \in \Sigma_k$ and $q, \seq q1k \in Q \cup P$.
\end{definition} 

It is well-known that $\sem{M + N} = \sem M + \sem N$.  Next, we
extend the sum construction to simulations.  To this end, let $M
\stackrel X\to M'$ with $M' = (\Sigma, Q', \mu', F')$, and let $N
\stackrel Y\to N'$ with $N' = (\Sigma, P', \nu', G')$.

\begin{definition}
  The sum $X + Y \in A^{(Q \cup P) \times (Q' \cup P')}$ of the
  transfer matrices $X$~and~$Y$ is
  \[ X + Y =
  \begin{pmatrix} 
    X & 0 \\
    0 & Y 
  \end{pmatrix} \enspace.
  \]
\end{definition} 

\begin{proposition}
  We have~$(M + N) \stackrel{X + Y}\longrightarrow (M' + N')$.
\end{proposition}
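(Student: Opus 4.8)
The plan is to verify the two conditions of Definition~\ref{df:Conj} for the block matrix $X + Y$ with respect to the sum automata $M+N$ and $M'+N'$. Since everything in sight is block-diagonal (the transition tensors $\kappa_k(\sigma)$ and $\kappa'_k(\sigma)$ are block-diagonal by the definition of the sum, the final weight vectors $H$ and $H'$ are block-stacked, and $X+Y$ is block-diagonal), the whole computation should reduce to checking the two conditions separately on the $Q$-block (using $M \stackrel X\to M'$) and on the $P$-block (using $N \stackrel Y\to N'$).

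First I would check condition~(i). We have $H = \langle F, G\rangle$ and $H' = \langle F', G'\rangle$, so
\[
  (X+Y) H' = \begin{pmatrix} X & 0 \\ 0 & Y \end{pmatrix} \begin{pmatrix} F' \\ G' \end{pmatrix} = \begin{pmatrix} XF' \\ YG' \end{pmatrix} = \begin{pmatrix} F \\ G \end{pmatrix} = H \enspace,
\]
using $F = XF'$ and $G = YG'$ from the two given simulations.

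Next I would check condition~(ii), namely $\kappa_k(\sigma)(X+Y) = (X+Y)^{k,\mathord\otimes} \cdot \kappa'_k(\sigma)$ for every $\sigma \in \Sigma_k$. The key observation is that the \textsc{Kronecker} power of a block-diagonal matrix, when the index set is partitioned as $Q \cup P$, decomposes the index set $(Q\cup P)^k$ into the $2^k$ blocks indexed by words $w \in (Q\cup P)^k$; the block of $(X+Y)^{k,\mathord\otimes}$ is zero unless $w$ lies entirely in $Q^k$ or entirely in $P^k$, in which case it equals $X^{k,\mathord\otimes}$ or $Y^{k,\mathord\otimes}$ respectively. Since $\kappa_k(\sigma)$ and $\kappa'_k(\sigma)$ are supported only on these ``pure'' blocks (by definition of the sum, $\kappa_k(\sigma)_{w,q} = 0$ unless $w, q$ all lie in $Q$ or all lie in $P$), the matrix product is again supported on the pure blocks, where it restricts to $\mu_k(\sigma) X = X^{k,\mathord\otimes} \cdot \mu'_k(\sigma)$ on the $Q$-part and $\nu_k(\sigma) Y = Y^{k,\mathord\otimes} \cdot \nu'_k(\sigma)$ on the $P$-part. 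Both hold by assumption, and on all mixed blocks both sides are~$0$.

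I do not expect a genuine obstacle here; the statement is essentially bookkeeping about block-diagonal matrices and their \textsc{Kronecker} powers. The one point that needs a little care is making the bookkeeping for $(X+Y)^{k,\mathord\otimes}$ precise: one should note that an entry $(X+Y)^{k,\mathord\otimes}_{\word w1k, \word v1k} = \prod_{\ell=1}^k (X+Y)_{w_\ell, v_\ell}$ vanishes as soon as some index $w_\ell$ sits in a different block ($Q$ versus $P$) than the corresponding $v_\ell$, and hence vanishes unless the whole word $\word w1k$ and the whole word $\word v1k$ lie in a common block; on a common block it literally equals $X^{k,\mathord\otimes}$ or $Y^{k,\mathord\otimes}$. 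Once this is recorded, the proof is a one-line reduction to the two hypotheses and can be stated very briefly.
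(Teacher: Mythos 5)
Your proposal is correct and follows essentially the same route as the paper: verify the two conditions of Definition~\ref{df:Conj} directly, with condition~(i) being the block computation $(X+Y)\langle F',G'\rangle = \langle XF', YG'\rangle = \langle F,G\rangle$ and condition~(ii) reducing to the pure $Q$- and $P$-blocks because all mixed blocks of $(X+Y)^{k,\mathord\otimes}$ and of the sum transition matrices vanish. The paper states this more tersely as $(X+Y)^{k,\mathord\otimes}\cdot\bigl(\mu'_k(\sigma)+\nu'_k(\sigma)\bigr) = X^{k,\mathord\otimes}\cdot\mu'_k(\sigma) + Y^{k,\mathord\otimes}\cdot\nu'_k(\sigma)$, which is exactly the bookkeeping fact you spell out.
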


\begin{proof}
  We only need to verify the two conditions of
  Definition~\ref{df:Conj}.  For every $\sigma \in \Sigma_k$ we have
  \begin{align*}
    &\phantom{{}={}} \bigl(\mu_k(\sigma) + \nu_k(\sigma) \bigr) \cdot
    (X + Y) = \mu_k(\sigma)X + \nu_k(\sigma)Y \\
    &= X^{k, \mathord{\otimes}} \cdot \mu'_k(\sigma) + Y^{k,
      \mathord{\otimes}} \cdot \mu'_k(\sigma) = (X + Y)^{k,
      \mathord{\otimes}} \cdot \bigl(\mu'_k(\sigma) + \nu'_k(\sigma)
    \bigr)
  \end{align*}
  and $\langle F, G \rangle = \langle XF', YG' \rangle = (X + Y) \cdot
  \langle F', G' \rangle$, which completes the proof. \qed
\end{proof}

\begin{proposition}
  The function~$+$, which is defined on wta and transfer matrices, is
  a functor $\text{\textbf{Sim}}_\Sigma^2 \to
  \text{\textbf{Sim}}_\Sigma$.
\end{proposition}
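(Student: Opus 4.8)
The plan is to observe that the action of $+$ on objects (a pair of wta $\mapsto$ their sum) and on morphisms (a pair of transfer matrices $\mapsto$ their sum) is already fixed by the two definitions above, and that the preceding proposition guarantees that $X + Y$ is indeed a transfer matrix witnessing $(M + N) \stackrel{X + Y}\longrightarrow (M' + N')$ whenever $X$ witnesses $M \stackrel X\to M'$ and $Y$ witnesses $N \stackrel Y\to N'$. Hence $+$ sends objects of $\text{\textbf{Sim}}_\Sigma^2$ to objects of $\text{\textbf{Sim}}_\Sigma$ and morphisms to morphisms with the expected source and target. It therefore remains only to check the two functoriality axioms: preservation of identities and preservation of composition.

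First I would check identities. The identity morphism on an object $(M, N)$ of the product category $\text{\textbf{Sim}}_\Sigma^2$ is the pair $(I_Q, I_P)$ of unit matrices, where $Q$~and~$P$ are the state sets of $M$~and~$N$. By the definition of the sum of transfer matrices, $I_Q + I_P$ is the block-diagonal matrix with diagonal blocks $I_Q$~and~$I_P$, which is exactly the unit matrix $I_{Q \cup P}$, i.e.\ the identity morphism on $M + N$ in $\text{\textbf{Sim}}_\Sigma$.

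Next I would check composition, which in the product category is componentwise. Given $(M, N) \stackrel{(X, Y)}\longrightarrow (M', N')$ and $(M', N') \stackrel{(X', Y')}\longrightarrow (M'', N'')$, their composite is $(XX', YY')$, recalling that composition in $\text{\textbf{Sim}}_\Sigma$ is matrix multiplication. So I must show $(X + Y)(X' + Y') = XX' + YY'$, which is immediate from block matrix multiplication since the off-diagonal blocks are zero:
\[ (X + Y)(X' + Y') =
\begin{pmatrix} X & 0 \\ 0 & Y \end{pmatrix}
\begin{pmatrix} X' & 0 \\ 0 & Y' \end{pmatrix} =
\begin{pmatrix} XX' & 0 \\ 0 & YY' \end{pmatrix} = XX' + YY' \enspace. \]
There is no genuine obstacle here; the only point requiring mild care is the bookkeeping, namely that the standing disjointness assumptions ($Q \cap P = \emptyset$, and likewise for the primed and double-primed state sets) keep the block decompositions compatible all along the chain, so that the composite morphism $M + N \longrightarrow M'' + N''$ is literally the simulation associated to the componentwise composite $(XX', YY')$. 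Once identities and composition are seen to be preserved, we conclude that $+$ is a functor $\text{\textbf{Sim}}_\Sigma^2 \to \text{\textbf{Sim}}_\Sigma$.
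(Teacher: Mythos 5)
Your proposal is correct and follows exactly the paper's route: the paper's proof likewise reduces the claim to checking that identity transfer matrices are preserved and that $(X + Y) \cdot (X' + Y') = XX' + YY'$, with the simulation property of $X + Y$ supplied by the preceding proposition. You merely spell out the block-matrix computations that the paper calls routine.
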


\begin{proof}
  It is a routine matter to verify that identity transfer matrices are
  preserved and $(X + Y) \cdot (X' + Y') = XX' + YY'$ for all
  composable transfer matrices~$X, X', Y, Y'$. \qed
\end{proof}

\begin{definition}
  Let $\sigma_0$ be a distinguished symbol in $\Sigma_0$.  The
  $\sigma_0$-product $M \cdot_{\sigma_0} N$ of~$M$ with~$N$ is the wta
  $(\Sigma, Q \cup P, \kappa, H)$ such that
  \[ H = \langle F, 0\rangle = \begin{pmatrix} F\\ 0 \end{pmatrix} \]
  and for each $\sigma \in \Sigma_k$ with $\sigma \neq \sigma_0$,
  \[ \kappa_k(\sigma)_{\word q1k, q} = 
  \begin{cases}
    \mu_k(\sigma)_{\word q1k, q} & \text{if } q, \seq q1k \in Q \\
    \mu_0(\sigma_0)_q \cdot \sum_{p \in P} \nu_k(\sigma)_{\word q1k,
      p} G_p & \text{if } q \in Q \text{ and } \seq q1k \in P \\ 
    \nu_k(\sigma)_{\word q1k, q} & \text{if } q, \seq q1k \in P \\
    0 & \text{otherwise.}
  \end{cases} 
  \]
  Moreover,
  \[ \kappa_0(\sigma_0)_q = 
  \begin{cases}
    \mu_0(\sigma_0)_q \cdot \sum_{p \in P} \nu_0(\sigma_0)_p G_p &
    \text{if } q\in Q \\
    \nu_0(\sigma_0)_q & \text{if } q \in P.
  \end{cases}
  \]
\end{definition}

It is known that $\sem{M \cdot_{\sigma_0} N} = \sem M \cdot_{\sigma_0}
\sem N$.  We extend this construction to simulations.  To this end,
let $M \stackrel X\to M'$ and $N \stackrel Y\to N'$.  Then we define
$X \cdot_{\sigma_0} Y = X + Y$.  The next proposition can be verified
by a routine calculation.

\begin{proposition}
  The function~$\cdot_{\sigma_0}$, which is defined on wta and
  transfer matrices, is a functor $\text{\textbf{Sim}}_\Sigma^2 \to
  \text{\textbf{Sim}}_\Sigma$.
\end{proposition}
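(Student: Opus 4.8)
The approach is to follow the pattern used for the sum functor. Since $X \cdot_{\sigma_0} Y$ is by definition the block-diagonal matrix $X + Y$, the functor laws are exactly the computations already carried out for~$+$: identity transfer matrices go to identity transfer matrices, and $(XX') \cdot_{\sigma_0} (YY') = XX' + YY' = (X \cdot_{\sigma_0} Y)(X' \cdot_{\sigma_0} Y')$ for all composable transfer matrices $X, X', Y, Y'$, by block-matrix multiplication. Hence the only part needing genuine work is to show that, whenever $M \stackrel X\to M'$ and $N \stackrel Y\to N'$, the matrix $X + Y$ is a transfer matrix from $M \cdot_{\sigma_0} N = (\Sigma, Q \cup P, \kappa, \langle F, 0 \rangle)$ to $M' \cdot_{\sigma_0} N' = (\Sigma, Q' \cup P', \kappa', \langle F', 0 \rangle)$; that is, that it satisfies the two conditions of Definition~\ref{df:Conj}.

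Condition~(i) is immediate: $(X + Y) \langle F', 0 \rangle = \langle XF', 0 \rangle = \langle F, 0 \rangle$, using $F = XF'$. For condition~(ii) I would fix $\sigma \in \Sigma_k$ and verify $\kappa_k(\sigma)(X + Y) = (X + Y)^{k, \mathord{\otimes}} \cdot \kappa'_k(\sigma)$ entrywise, reading every index through the partition of the state set into its $Q$-part and its $P$-part. Two observations keep this manageable. First, $\kappa_k(\sigma)_{\word q1k, q}$ vanishes unless $q, \seq q1k$ all lie in the same block, so every row of $\kappa_k(\sigma)$ indexed by a tuple with coordinates in both $Q$ and $P$ is zero. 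Second, since $X + Y$ is block diagonal, $(X + Y)^{k, \mathord{\otimes}}_{\word q1k, \word{q'}1k} = \prod_{\ell = 1}^k (X + Y)_{q_\ell, q'_\ell}$ is zero unless $q_\ell$ and $q'_\ell$ lie on matching sides for every~$\ell$, in which case the entry is the matching product of entries of~$X$ (over the $Q$-coordinates) and of~$Y$ (over the $P$-coordinates). Hence for $k \geq 1$ only three kinds of block contribute, and the first two collapse to facts already on hand: the block on $Q$-rows and $Q$-columns gives $\mu_k(\sigma) X = X^{k, \mathord{\otimes}} \cdot \mu'_k(\sigma)$, and the block on $P$-rows and $P$-columns gives $\nu_k(\sigma) Y = Y^{k, \mathord{\otimes}} \cdot \nu'_k(\sigma)$. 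Every other off-diagonal block reduces to $0 = 0$ by the two observations, and the nullary case $k = 0$ (including $\sigma = \sigma_0$) is treated the same way, with the $k$-tuples replaced by the empty tuple.

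The step I expect to be the main obstacle is the remaining block, on rows in $P^k$ and columns in~$Q$, whose $\kappa_k(\sigma)$-entry at row $\word q1k$ and column $q$ is $\mu_0(\sigma_0)_q \cdot \sum_{p \in P} \nu_k(\sigma)_{\word q1k, p} G_p$. On the left-hand side, right-multiplication by $X + Y$ retains only the $Q$-columns, so the entry of $\kappa_k(\sigma)(X + Y)$ at row $\word q1k \in P^k$ and column $q' \in Q'$ equals $\bigl( \sum_{q \in Q} \mu_0(\sigma_0)_q\, x_{q, q'} \bigr) \cdot \sum_{p \in P} \nu_k(\sigma)_{\word q1k, p} G_p$. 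Here $\sum_{q \in Q} \mu_0(\sigma_0)_q\, x_{q, q'} = \mu'_0(\sigma_0)_{q'}$ is exactly the $k = 0$ instance of Definition~\ref{df:Conj}(ii) for $M \stackrel X\to M'$, while $\sum_{p \in P} \nu_k(\sigma)_{\word q1k, p} G_p = \sum_{p' \in P'} (Y^{k, \mathord{\otimes}} \cdot \nu'_k(\sigma))_{\word q1k, p'} G'_{p'}$ follows by substituting $G = YG'$ and transporting $Y$ across~$\sigma$ through $\nu_k(\sigma) Y = Y^{k, \mathord{\otimes}} \cdot \nu'_k(\sigma)$. On the right-hand side, $\bigl( (X + Y)^{k, \mathord{\otimes}} \cdot \kappa'_k(\sigma) \bigr)_{\word q1k, q'}$, expanded through the block structure and the definition of the corresponding entry of $\kappa'_k(\sigma)$, equals $\mu'_0(\sigma_0)_{q'} \cdot \sum_{p' \in P'} (Y^{k, \mathord{\otimes}} \cdot \nu'_k(\sigma))_{\word q1k, p'} G'_{p'}$, so the two sides coincide; the nullary instance $\sigma = \sigma_0$ is entirely analogous. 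Once all blocks are checked, $X + Y$ satisfies Definition~\ref{df:Conj}, so $\cdot_{\sigma_0}$ is well-defined on morphisms; together with the functor laws above, this proves the proposition.
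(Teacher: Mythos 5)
Your proof is correct and is precisely the ``routine calculation'' that the paper leaves to the reader: a block-by-block verification of Definition~\ref{df:Conj} for $X + Y$, with the only nontrivial block being the one on $P^k$-rows and $Q'$-columns, which you handle correctly via the $k=0$ instance of simulation for $X$ and the substitution $G = YG'$ (note that rearranging $x_{q,q'}$ past $\sum_{p} \nu_k(\sigma)_{\word q1k,p} G_p$ uses the standing commutativity assumption on~$\mathcal A$ in this section). No gaps.
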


\begin{definition}
  The \textsc{Hadamard} product $M \cdot_{\mathrm H} N$ is the wta
  $(\Sigma, Q \times P, \kappa, H)$ where $H = F \otimes G$ and
  $\kappa_k(\sigma) = \mu_k(\sigma) \otimes \nu_k(\sigma)$ for all
  $\sigma \in \Sigma_k$.
\end{definition}

We again extend the construction to simulations.  If $M \stackrel X\to
M'$ and $N \stackrel Y\to N'$, then we define $X \cdot_{\mathrm H}
X \otimes Y$.

\begin{proposition}
  The function~$\cdot_{\mathrm H}$, which is defined on wta and
  transfer matrices, is a functor $\text{\textbf{Sim}}_\Sigma^2 \to
  \text{\textbf{Sim}}_\Sigma$.
\end{proposition}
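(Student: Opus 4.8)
The plan is to reduce everything to the \emph{mixed-product property} of the \textsc{Kronecker} product: for all matrices $A, B, C, D$ over the commutative semiring~${\cal A}$ whose dimensions make $AC$ and $BD$ well-defined, one has $(A \otimes B) \cdot (C \otimes D) = (AC) \otimes (BD)$. First I would prove this identity once by a direct entrywise computation: both sides, evaluated at $((i_1, i_2), (l_1, l_2))$, equal $\sum_{j_1, j_2} a_{i_1 j_1} c_{j_1 l_1} b_{i_2 j_2} d_{j_2 l_2}$, the only step needing justification being the commutation of the scalars $b_{i_2 j_2}$ and $c_{j_1 l_1}$, which is exactly where commutativity of~${\cal A}$ (assumed throughout this section) enters. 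Since the excerpt has no separate proposition asserting $(M \cdot_{\mathrm H} N) \stackrel{X \otimes Y}{\longrightarrow} (M' \cdot_{\mathrm H} N')$, I would first establish that as part of this proof.

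For that, suppose $M \stackrel X\to M'$ and $N \stackrel Y\to N'$, so that $F = XF'$, $G = YG'$, $\mu_k(\sigma) X = X^{k, \mathord{\otimes}} \cdot \mu'_k(\sigma)$, and $\nu_k(\sigma) Y = Y^{k, \mathord{\otimes}} \cdot \nu'_k(\sigma)$. Condition~(i) of Definition~\ref{df:Conj} for $X \otimes Y$ reads $F \otimes G = (X \otimes Y)(F' \otimes G')$, which follows from the mixed-product property with $F', G'$ viewed as one-column matrices. For condition~(ii), fix $\sigma \in \Sigma_k$; using $\kappa_k(\sigma) = \mu_k(\sigma) \otimes \nu_k(\sigma)$ and the mixed-product property twice together with the transfer equations, I compute
\begin{align*}
  \kappa_k(\sigma) \cdot (X \otimes Y)
  &= (\mu_k(\sigma) X) \otimes (\nu_k(\sigma) Y) \\
  &= \bigl( X^{k, \mathord{\otimes}} \cdot \mu'_k(\sigma) \bigr) \otimes \bigl( Y^{k, \mathord{\otimes}} \cdot \nu'_k(\sigma) \bigr) \\
  &= \bigl( X^{k, \mathord{\otimes}} \otimes Y^{k, \mathord{\otimes}} \bigr) \cdot \bigl( \mu'_k(\sigma) \otimes \nu'_k(\sigma) \bigr) \enspace.
\end{align*}
Since $\mu'_k(\sigma) \otimes \nu'_k(\sigma) = \kappa'_k(\sigma)$ by definition of $M' \cdot_{\mathrm H} N'$, it then remains to see that $X^{k, \mathord{\otimes}} \otimes Y^{k, \mathord{\otimes}}$ coincides with $(X \otimes Y)^{k, \mathord{\otimes}}$. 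The two matrices are indexed by $(Q^k \times P^k) \times ({Q'}^k \times {P'}^k)$ and by $(Q \times P)^k \times (Q' \times P')^k$, respectively — sets of equal cardinality that we identify by the obvious shuffle of coordinates — and under this identification their entries agree, both equalling $\prod_{\ell = 1}^k x_{q_\ell q'_\ell} y_{p_\ell p'_\ell}$ up to a reordering of factors that is legitimate by commutativity. I expect this index bookkeeping (a variant of Lemma~\ref{lm:3}, proved there by the same kind of coordinate identification) to be the only genuinely fiddly point of the whole argument.

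Finally, for functoriality proper: the identity transfer matrix on $Q \times P$ is $I_Q \otimes I_P$, so the pair of identity transfer matrices $(I_Q, I_P)$ is sent to an identity, and for composable transfer matrices $M \stackrel X\to M' \stackrel{X'}\to M''$ and $N \stackrel Y\to N' \stackrel{Y'}\to N''$ the mixed-product property gives $(X \otimes Y) \cdot (X' \otimes Y') = (XX') \otimes (YY')$, which says precisely that $\cdot_{\mathrm H}$ respects composition of simulations. Together with the previous paragraph this exhibits $\cdot_{\mathrm H}$ as a functor $\text{\textbf{Sim}}_\Sigma^2 \to \text{\textbf{Sim}}_\Sigma$.
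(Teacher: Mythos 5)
Your proof is correct and is essentially the argument the paper has in mind: the paper states this proposition without proof (treating it, like the sum case, as a routine verification), and the mixed-product property $(A \otimes B)(C \otimes D) = (AC) \otimes (BD)$ over a commutative semiring, together with the identification $(X \otimes Y)^{k,\mathord{\otimes}} = X^{k,\mathord{\otimes}} \otimes Y^{k,\mathord{\otimes}}$ under the coordinate shuffle $(Q \times P)^k \cong Q^k \times P^k$, is exactly the computation needed. You correctly isolate the two places where commutativity of ${\cal A}$ is used, so nothing is missing.
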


Finally, we deal with iteration. Let $\sigma_0$~be a fixed symbol
in~$\Sigma_0$.  Here we assume that $\mathcal{A}$~is complete.  Thus,
$\mathcal{A}$~comes with a star operation $a^* = \sum_{n \in \nat}
a^n$ for every $a \in A$.

\begin{definition}
  The $\sigma_0$-iteration $M^{*_{\sigma_0}}$ of~$M$ is the wta
  $(\Sigma, Q, \kappa, F)$ where
  \[ \kappa_k(\sigma)_{\word q1k, q} = \mu_k(\sigma)_{\word q1k, q} +
  \sem{M}(\sigma_0)^* \cdot \sum_{p \in Q} \mu_k(\sigma)_{\word q1k,
    p} F_p \] for all $\sigma \in \Sigma_k \setminus \{\sigma_0\}$ and
  $\kappa_0(\sigma_0) = \mu_0(\sigma_0)$.
\end{definition}

If $M \stackrel X\to M'$, then we define $X^{*_{\sigma_0}} = X$.

\begin{proposition}
  The $\sigma_0$-iteration, which is defined on wta and transfer
  matrices, is a functor $\text{\textbf{Sim}}_\Sigma \to
  \text{\textbf{Sim}}_\Sigma$.
\end{proposition}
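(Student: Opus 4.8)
The plan is to reduce the proposition to its only substantive ingredient — that the assignment is well defined on morphisms — since the two functor laws then come for free. Concretely, it suffices to show that $M \stackrel X\to M'$ implies $M^{*_{\sigma_0}} \stackrel X\to (M')^{*_{\sigma_0}}$: once that holds, then because $X^{*_{\sigma_0}} = X$ and $M^{*_{\sigma_0}}$ has the same state set as $M$, we get $I_Q^{*_{\sigma_0}} = I_Q$ (the identity transfer matrix of $M^{*_{\sigma_0}}$) and $(XY)^{*_{\sigma_0}} = XY = X^{*_{\sigma_0}}\cdot Y^{*_{\sigma_0}}$, which is everything functoriality asks. So write $M = (\Sigma, Q, \mu, F)$, $M' = (\Sigma, Q', \mu', F')$, $M^{*_{\sigma_0}} = (\Sigma, Q, \kappa, F)$ and $(M')^{*_{\sigma_0}} = (\Sigma, Q', \kappa', F')$, and check the two clauses of Definition~\ref{df:Conj} for~$X$. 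Clause~(i), $F = XF'$, is literally the hypothesis, since $\sigma_0$-iteration does not change the final-weight vector; and in clause~(ii) the symbol $\sigma_0$ is immediate, as there $k = 0$, $\kappa_0(\sigma_0) = \mu_0(\sigma_0)$, $\kappa'_0(\sigma_0) = \mu'_0(\sigma_0)$, and $\kappa_0(\sigma_0)X = X^{0, \mathord{\otimes}}\cdot\kappa'_0(\sigma_0)$ is precisely clause~(ii) of $M\stackrel X\to M'$ at~$\sigma_0$.

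Everything therefore reduces to clause~(ii) for $\sigma\in\Sigma_k\setminus\{\sigma_0\}$, i.e.\ to $\kappa_k(\sigma)X = X^{k, \mathord{\otimes}}\cdot\kappa'_k(\sigma)$. Here I would assemble three facts. First, by Theorem~\ref{thm:Equiv}, $\sem M = \sem{M'}$, so the scalar $\sem M(\sigma_0)^{*}$ appearing in $\kappa_k(\sigma)$ equals the scalar $\sem{M'}(\sigma_0)^{*}$ appearing in $\kappa'_k(\sigma)$. Second, clause~(ii) of $M\stackrel X\to M'$ at $\sigma_0$ is $\mu_0(\sigma_0)X = \mu'_0(\sigma_0)$, so the row $\mu_0(\sigma_0)$, which governs the target side of the correction term, is intertwined by~$X$. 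Third, clause~(ii) at $\sigma$ followed by clause~(i) yields $\mu_k(\sigma)F = \mu_k(\sigma)XF' = X^{k, \mathord{\otimes}}\cdot\mu'_k(\sigma)F'$, so the column $\mu_k(\sigma)F$, which governs the source side of the correction term, is intertwined by~$X^{k, \mathord{\otimes}}$. It is then convenient to express the transition matrix of $M^{*_{\sigma_0}}$ at $\sigma$ in the factored form $\kappa_k(\sigma) = \mu_k(\sigma)\cdot\bigl(F\,\mu_0(\sigma_0)\bigr)^{*}$ and to expand it with the rank-one star identity $(uv)^{*} = I + u\,(vu)^{*}\,v$, using that $\mu_0(\sigma_0)F = \sem M(\sigma_0)$ is a scalar; then multiplying on the right by $X$, pushing $X$ inward by the second and third facts, relabelling the scalar by the first, and re-collecting, one reaches $X^{k, \mathord{\otimes}}\cdot\mu'_k(\sigma)\cdot\bigl(F'\mu'_0(\sigma_0)\bigr)^{*} = X^{k, \mathord{\otimes}}\cdot\kappa'_k(\sigma)$, as required. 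The same manipulation can equally well be performed entrywise.

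I expect the principal obstacle to be organizational rather than conceptual: the correction term of $\kappa_k(\sigma)$ couples a source-side quantity, $\mu_k(\sigma)F$, with a target-side quantity, $\mu_0(\sigma_0)$, so making the transfer matrix commute past it forces the simultaneous use of the $\sigma$-equation, the $\sigma_0$-equation and $F = XF'$, together with the semantic equality $\sem M = \sem{M'}$ needed to identify the two starred scalars; the only real place to go wrong is in tracking which side of each Kronecker power every factor occupies. Since the other functors of this section ($+$, $\cdot_{\sigma_0}$, $\cdot_{\mathrm H}$) follow the same template, I would also note that the genuinely new input here is just the rank-one star identity.
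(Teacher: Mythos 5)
Your argument is correct in substance, and since the paper states this proposition without any proof (the preceding ones are dismissed as ``routine''), there is no official argument to compare against; yours is the natural verification. Clause~(i) and the case of~$\sigma_0$ are indeed immediate, and for $\sigma \neq \sigma_0$ the three facts you assemble --- $\mu_k(\sigma)X = X^{k, \mathord{\otimes}} \cdot \mu'_k(\sigma)$, the $\sigma_0$-instance $\mu_0(\sigma_0)X = \mu'_0(\sigma_0)$ (with $X^{0,\mathord{\otimes}} = (1)$), and $\mu_k(\sigma)F = \mu_k(\sigma)XF' = X^{k, \mathord{\otimes}} \cdot \mu'_k(\sigma)F'$, together with $\sem{M}(\sigma_0) = \sem{M'}(\sigma_0)$ --- are exactly what is needed; commutativity and completeness of~$\mathcal{A}$, both assumed in this section, license moving the starred scalar and the identity $(uv)^* = I + u(vu)^*v$. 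The reduction of the two functor laws to the triviality $X^{*_{\sigma_0}} = X$ is also fine.

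One caveat you should make explicit. Your factored form $\kappa_k(\sigma) = \mu_k(\sigma) \cdot (F\mu_0(\sigma_0))^*$ expands entrywise to $\mu_k(\sigma)_{q_1 \dotsm q_k, q} + \sem{M}(\sigma_0)^* \cdot \bigl(\sum_{p \in Q} \mu_k(\sigma)_{q_1 \dotsm q_k, p} F_p\bigr) \cdot \mu_0(\sigma_0)_q$, whereas the formula displayed in the paper's definition of $M^{*_{\sigma_0}}$ carries no factor $\mu_0(\sigma_0)_q$: its correction term does not depend on the target state~$q$ at all. Taken literally, that definition makes the proposition false --- the correction term of $\kappa_k(\sigma)X$ then picks up a factor $\sum_{q \in Q} x_{q q'}$, so the intertwining would force every column sum of the transfer matrix to equal~$1$, which a transfer matrix need not satisfy (consider $X = I + I$). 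The factor $\mu_0(\sigma_0)_q$ is clearly the intended reading, being the exact analogue of the re-entry weight $\mu_0(\sigma_0)_q$ in the correction term of the $\sigma_0$-product, and it is what makes both the semantics and the functoriality work; but as written your proof silently substitutes this corrected definition for the displayed one, and that substitution should be stated rather than presented as an identity.
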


% This needs to be verified. 

\begin{remark}
  Several subcategories of~$\text{\textbf{Sim}}_\Sigma$ are also of
  interest, for example the categories formed by the relational or
  functional simulations and their duals.  The above
  constructions are preserved by these special kinds of simulations.
\end{remark} 

\section{Joint reduction}
\label{sec:Joint}
Next we will establish equivalence results using the approach called
\emph{joint reduction} in~\cite{bealomsak06}.  Let $V \subseteq A^I$
be a set of vectors for a finite set~$I$.  Then the ${\cal
  A}$-semimodule generated by~$V$ is denoted by~$\langle V \rangle$.
Given two wta $M = (\Sigma, Q, \mu, F)$ and $N = (\Sigma, P, \nu, G)$
with $Q \cap P = \emptyset$, we first compute $M + N = (\Sigma, Q \cup
P, \mu', F')$ as defined in Section~\ref{sec:Category}.  Now the aim
is to compute a finite set~$V \subseteq A^{Q \cup P}$ such that
\begin{itemize}
\item[(i)] $(v_1 \otimes \dotsm \otimes v_k) \cdot \mu'_k(\sigma) \in
  \langle V \rangle$ for every $\sigma \in \Sigma_k$ and $\seq v1k \in
  V$, and
\item[(ii)] $v_1F = v_2G$ for every $(v_1, v_2) \in V$ such that $v_1
  \in A^Q$ and $v_2 \in A^P$.
\end{itemize}

With such a finite set~$V$ we can now construct a wta $M' = (\Sigma,
V, \nu', G')$ with $G'_v = v F'$ for every $v \in V$ and
\[ \sum_{v \in V} \nu'_k(\sigma)_{\word v1k, v} \cdot v = (v_1 \otimes
\dotsm \otimes v_k) \cdot \mu'_k(\sigma) \] for every $\sigma \in
\Sigma_k$ and $\seq v1k \in V$.  It remains to prove that $M'$
simulates~$M + N$.  To this end, let $X = (v)_{v \in V}$, where each
$v \in V$ is a row vector.  Then for every $\sigma \in \Sigma_k$,
$\seq v1k \in V$, and $q \in Q \cup P$, we have
\begin{align*}
  &\phantom{{}={}} (\nu'_k(\sigma) X)_{\word v1k, q} = \sum_{v \in V}
  \nu'_k(\sigma)_{\word v1k, v} \cdot v_q = \Bigl( \sum_{v \in V}
  \nu'_k(\sigma)_{\word v1k, v} \cdot v \Bigr)_q \\
  &= \bigl( (v_1 \otimes \dotsm \otimes v_k) \cdot \mu'_k(\sigma)
  \bigr)_q = \sum_{\seq q1k \in Q \cup P} (v_1)_{q_1} \cdot \ldots
  \cdot (v_k)_{q_k} \cdot \mu'_k(\sigma)_{\word q1k, q} \\
  &= \bigl( X^{k, \mathord{\otimes}} \cdot \mu'_k(\sigma)
  \bigr)_{\word v1k, q} \enspace.
\end{align*}
Moreover, if we let $X_1$~and~$X_2$ be the restrictions of~$X$ to the
entries of $Q$~and~$P$, respectively, then we have $\nu'_k(\sigma)X_1
= X_1^{k, \mathord{\otimes}} \cdot \mu_k(\sigma)$ and
$\nu'_k(\sigma)X_2 = X_2^{k, \mathord{\otimes}} \cdot \nu_k(\sigma)$.
In addition, $G'_v = v F' = \sum_{q \in Q \cup P} v_q F'_q = (XF')_v$
for every $v \in V$, which proves that $M' \stackrel X\to (M + N)$.
Since $v_1F = v_2G$ for every $(v_1, v_2) \in V$, we have $G'_{(v_1,
  v_2)} = (v_1, v_2)F' = v_1F + v_2G = (1+1)v_1F = (1 + 1)v_2G$.  Now,
let $G''_{(v_1, v_2)} = v_1F = v_2G$ for every $(v_1, v_2) \in V$.
Then
\begin{align*}
  G''_v &= v_1F = \sum_{q \in Q} v_q F_q = (X_1F)_v  \\
  &= v_2G = \sum_{p \in P} v_p G_p = (X_2G)_v 
\end{align*} 
for every $v = (v_1, v_2) \in V$.  Consequently, $M'' \stackrel{X_1}\to
M$ and $M'' \stackrel{X_2}\to N$, where $M'' = (\Sigma, V, \nu',
G'')$.  This proves the next theorem.

\begin{theorem}
  \label{thm:Joint}
  Let $M$~and~$N$ be two equivalent wta.  If there exists a finite
  set~$V \subseteq A^{Q \cup P}$ with properties (i)~and~(ii), then
  there exists a chain of simulations that join $M$~and~$N$.  In fact,
  there exists a single wta that simulates both $M$~and~$N$.
\end{theorem}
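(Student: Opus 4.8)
The plan is to realize $M$ and $N$ as the two legs of a span of simulations: I will build one wta $M''$ together with transfer matrices witnessing $M'' \stackrel{X_1}\to M$ and $M'' \stackrel{X_2}\to N$, which immediately produces a finite chain of simulations joining $M$ and $N$ and exhibits the single joining wta claimed in the second sentence. All the pieces have in fact been assembled in the discussion preceding the statement, so the proof is mostly a matter of putting them together in order.

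First I would take the finite set $V \subseteq A^{Q \cup P}$ supplied by the hypothesis together with the sum $M + N = (\Sigma, Q \cup P, \mu', F')$, and invoke property~(i): since $(v_1 \otimes \dotsm \otimes v_k) \cdot \mu'_k(\sigma) \in \langle V \rangle$ for all $\sigma \in \Sigma_k$ and $\seq v1k \in V$, this vector is an $\mathcal A$-linear combination of the vectors in $V$, and selecting coefficients $\nu'_k(\sigma)_{\word v1k, v}$ realizing this membership defines a wta $M' = (\Sigma, V, \nu', G')$ with $G'_v = v F'$. Forming the matrix $X = (v)_{v \in V}$ whose $v$-th row is the vector~$v$, the displayed computation above gives $\nu'_k(\sigma) X = X^{k, \mathord{\otimes}} \cdot \mu'_k(\sigma)$ for every $\sigma \in \Sigma_k$, while $G' = X F'$ is immediate; hence $M' \stackrel X\to (M + N)$.

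Next I would restrict $X$ to the columns indexed by $Q$ and by $P$, obtaining $X_1 \in A^{V \times Q}$ and $X_2 \in A^{V \times P}$. Since the $Q$- and $P$-blocks of the transitions~$\mu'$ of $M + N$ are precisely those of $M$ and of $N$ (all cross transitions being zero), the simulation identity for $M + N$ restricts to $\nu'_k(\sigma) X_1 = X_1^{k, \mathord{\otimes}} \cdot \mu_k(\sigma)$ and $\nu'_k(\sigma) X_2 = X_2^{k, \mathord{\otimes}} \cdot \nu_k(\sigma)$. The one point that needs care is the final-weight vector of~$M''$: the naive choice $v_1 F + v_2 G$ equals $(1 + 1)\, v_1 F$ by property~(ii) and is thus off by a factor, so I would instead take $G''_{(v_1, v_2)} = v_1 F = v_2 G$, which is well defined exactly because of~(ii). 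Then $X_1 F = G'' = X_2 G$, so $M'' = (\Sigma, V, \nu', G'')$ satisfies $M'' \stackrel{X_1}\to M$ and $M'' \stackrel{X_2}\to N$, and, by Theorem~\ref{thm:Equiv}, is equivalent to both.

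The argument is routine once $V$ is in hand; the two genuinely load-bearing hypotheses are~(i), which is what makes the transition matrices~$\nu'$ of $M'$ (and hence of $M''$) definable at all, and~(ii), which supplies the consistent final-weight vector~$G''$ used to get the simulation conditions on both legs to hold simultaneously. I would also note in passing that $\nu'$ is in general far from unique — any choice of coefficients witnessing the $\langle V \rangle$-memberships works — so the construction in fact yields a whole family of joining wta, of which $M''$ is one.
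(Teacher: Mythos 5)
Your proposal is correct and follows essentially the same route as the paper's own argument: build $M'$ from $V$ using property~(i), form $X$ from the rows $v \in V$, restrict to $X_1$ and $X_2$, and replace the doubled final-weight vector by $G''_{(v_1,v_2)} = v_1 F = v_2 G$ using property~(ii) to obtain the single wta $M''$ simulating both $M$ and~$N$. Nothing further is needed.
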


\subsection{Fields}
\label{sec:Fields}
In this section, let ${\cal A}$~be a field.  We first recall some
notions from~\cite{aleboz89}.  Let $\varphi \in \series A\Sigma$ be a
tree series.  The \emph{syntactic ideal} of~$\varphi$ is
\[ I_\varphi = \{ \psi \in \series A\Sigma \mid \sum_{t \in T_\Sigma}
(\psi, t) (\varphi, c[t]) = 0 \text{ for all } c \in C_\Sigma \}
\enspace. \] Moreover, let $\mathord\equiv$ be the equivalence
relation on~$\series A\Sigma$ such that $\psi_1 \equiv \psi_2$ if and
only if $\psi_1 - \psi_2 \in I_\varphi$.  The syntactic algebra is
$[\series A\Sigma]_\equiv$.  By~\cite[Proposition~2]{aleboz89} the
tree series~$\varphi$ is recognizable if and only if its syntactic
algebra has finite dimension.  Now, let $\varphi$~be recognizable, and
let $B$~be a basis of its syntactic algebra.  Finally, let
$M_\varphi$~be the obtained canonical weighted tree automaton, which
recognizes~$\varphi$.

\begin{theorem}[{\protect\cite[p.~453]{aleboz89}}]
  \label{thm:ReltoMin}
  Every trim wta recognizing~$\varphi$ simulates~$M_\varphi$.
\end{theorem}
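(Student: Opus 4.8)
The plan is to exhibit an explicit transfer matrix from an arbitrary trim wta $M = (\Sigma, Q, \mu, F)$ recognizing $\varphi$ to the canonical wta $M_\varphi$. First I would recall the structure of $M_\varphi$: its state set is (indexed by) the fixed basis $B$ of the syntactic algebra $[\series A\Sigma]_\equiv$, and its transitions and final weights are defined so that $h_{\mu_\varphi}(t)$ computes, in the coordinates given by $B$, the class $[\psi_t]_\equiv$ of the ``residual'' series $\psi_t$ associated with $t$ (the series $c \mapsto (\varphi, c[t])$, or the corresponding element of the syntactic algebra). Concretely, $h_{\mu_\varphi}(t)$ is the coordinate vector of that class with respect to $B$, and the final weight vector $G$ of $M_\varphi$ is chosen so that $h_{\mu_\varphi}(t) \cdot G = (\varphi, t)$ for every $t$.

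Next I would build the transfer matrix $X \in A^{Q \times B}$. Since $M$ is trim, every state $q \in Q$ is reachable, so pick for each $q$ a tree (or context-free combination) realizing it; more robustly, for each $q$ consider the linear functional or residual series determined by the ``future'' of $q$, i.e.\ the map sending a context $c$ to the weight that $M$ assigns starting from $q$ through $c$. Co-accessibility of $q$ guarantees this residual is a well-defined element whose class lies in the syntactic algebra, and I let the $q$-th row of $X$ be its coordinate vector in the basis $B$. The verification then splits into the two conditions of Definition~\ref{df:Conj}: condition (i), $F = XG$, should follow because the residual attached to $q$ evaluated ``at the empty context'' gives $F_q$, which matches $h_{\mu_\varphi}$ composed with $G$; condition (ii), $\mu_k(\sigma) X = X^{k,\mathord\otimes} \cdot \nu_k(\sigma)$, should follow because applying a symbol $\sigma$ to residuals is exactly the operation that the syntactic algebra multiplication (equivalently, the transitions $\nu_k(\sigma)$ of $M_\varphi$) implements — this is where one uses that $\mu$ recognizes $\varphi$, so that the residual of $\sigma(q_1,\dots,q_k)$ depends only on the residuals (classes) of the $q_i$, and depends on them via $\nu_k(\sigma)$.

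The key point making all of this work, and the one I would be most careful about, is well-definedness and the identification with the syntactic algebra: one must check that the assignment $q \mapsto$ (class of residual of $q$) is compatible with $\equiv$ and that expressing these classes in the basis $B$ really does intertwine $\mu$ with $\nu$. This rests on the defining property of $M_\varphi$ from \cite{aleboz89} (that $M_\varphi$ arises from the syntactic algebra, with transitions dual to the algebra operations) together with the fact that since $\sem M = \varphi$, the behaviour of $M$ from any state factors through the syntactic algebra of $\varphi$. I expect the main obstacle to be bookkeeping the Kronecker-product indexing in condition~(ii) — showing that the $k$-ary transition of $M$ post-composed with $X$ equals $X^{k,\mathord\otimes}$ pre-composed with the canonical transition — which is a routine but notation-heavy matrix computation once the residual/syntactic-algebra correspondence is set up correctly. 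Co-accessibility (trimness) is exactly what is needed to ensure no row of $X$ is ill-defined or spurious, so the ``trim'' hypothesis enters precisely there.
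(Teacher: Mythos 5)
First, a caveat about the comparison itself: the paper does not prove Theorem~\ref{thm:ReltoMin} --- it imports it verbatim from \cite[p.~453]{aleboz89} --- so there is no in-paper argument to measure your sketch against, and I can only judge it on its own terms. The overall architecture you propose (an explicit transfer matrix $X \in A^{Q \times B}$ mediating between the state space of $M$ and the syntactic algebra, verified against the two conditions of Definition~\ref{df:Conj}) is the right shape and is surely close in spirit to the cited construction; the Kronecker-product bookkeeping in condition~(ii) that you flag as the main obstacle is indeed routine.

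The genuine gap sits exactly at the step you yourself call ``the key point.'' Write $\beta_q \colon C_\Sigma \to A$ for the residual of state $q$ that you describe, so that $(\varphi, c[t]) = \sum_{q \in Q} h_\mu(t)_q \, \beta_q(c)$ for all $t$ and $c$. The syntactic algebra, realized as functionals on contexts, is the span of the maps $c \mapsto (\varphi, c[t])$ for $t \in T_\Sigma$, i.e.\ the span of the particular combinations $\sum_q h_\mu(t)_q \beta_q$ --- \emph{not} of the individual $\beta_q$. Hence $[\beta_q]$ lies in the syntactic algebra only if the unit vector $e_q$ belongs to the span of the reachability vectors $\{h_\mu(t) \mid t \in T_\Sigma\}$ modulo the unobservable subspace. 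That is a linear-algebraic spanning condition, and it is not delivered by co-accessibility (nor by accessibility) of $q$ in the Boolean automaton $f(M)$, which is all that the paper's notion of trimness provides; you have, moreover, attributed the burden to the wrong half of trimness, since what is at stake is reachability, not observability. The gap is not cosmetic: over the rationals take $Q = \{1,2\}$, one nullary symbol $e$ with $\mu_0(e) = (1,-1)$, one unary symbol carried by the identity matrix, and $F = (1,1)^{\mathrm T}$. This wta is trim in the paper's sense and recognizes the zero series, whose syntactic algebra is trivial, yet condition~(i) $F = XG$ forces $F = 0$, so no transfer matrix exists; correspondingly $\beta_1 = \beta_2$ is the constant-one functional, which does not lie in the (zero) syntactic algebra. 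To repair the plan you must either assume linear reachability (the vectors $h_\mu(t)$ span $A^Q$) or first pass to the reachable subspace and check that this reduction points in the correct direction for the claimed simulation --- and it is this linear reduction, not trimness of $f(M)$, that the argument in \cite{aleboz89} actually rests on.
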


Consequently, all equivalent trim wta $M_1$~and~$M_2$ simulate the
canonical wta that recognizes~$\sem M$.  Using Theorem~\ref{thm:2} we
can show that there exist wta~$M'_1$, $M'_2$, $N'_1$, and~$N'_2$ such
that
\begin{itemize}
\item $M_1 \twoheadleftarrow M'_1$,
\item $M'_1 \stackrel E\to N'_1$ with an invertible diagonal
  matrix~$E$,
\item $N'_1 \twoheadrightarrow M_\varphi$,
\item $N'_2 \twoheadrightarrow M_\varphi$,
\item $M'_2 \stackrel{E'}\to N'_2$ with an invertible diagonal
  matrix~$E'$, and
\item $M_2 \twoheadleftarrow M_2$.
\end{itemize}
This can be illustrated as follows:
\[ M_1 \xleftarrow{\text{backward}} M'_1
\xrightarrow{\text{diagonal}} N'_1 
\xrightarrow{\text{forward}} M_\varphi
\xleftarrow{\text{forward}} N'_2
\xleftarrow{\text{diagonal}} M'_2
\xrightarrow{\text{backward}} M_2 \]

\begin{theorem}
  \label{thm:Field}
  Every two equivalent trim wta $M$~and~$N$ over the field~${\cal A}$
  can be joined by a chain of simulations.  Moreover, there exists a
  minimal wta~$M_{\sem M}$ such that $M$~and~$N$ both
  simulate~$M_{\sem M}$.
\end{theorem}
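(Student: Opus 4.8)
The plan is to combine Theorem~\ref{thm:ReltoMin} with Theorem~\ref{thm:2}. First, recall that since $\mathcal A$ is a field, it is an equisubtractive semiring with $A = \langle U \rangle_{\mathord+}$ (indeed every nonzero element is a unit, and subtraction exists). Hence Theorem~\ref{thm:2} is applicable to every transfer matrix over $\mathcal A$, and its additional conclusion holds whenever the source and target wta are trim.

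Let $\varphi = \sem M = \sem N$, and let $M_\varphi$ be the canonical wta recognizing $\varphi$. Both $M$ and $N$ are trim and recognize $\varphi$, so by Theorem~\ref{thm:ReltoMin} we have $M \stackrel{X}\to M_\varphi$ and $N \stackrel{Y}\to M_\varphi$ for suitable transfer matrices $X$ and $Y$. Now apply Theorem~\ref{thm:2} to the simulation $M \stackrel{X}\to M_\varphi$: since both $M$ and $M_\varphi$ are trim, we obtain wta $M'$ and $N'$ with
\[ M \twoheadleftarrow M' \stackrel{E}\to N' \twoheadrightarrow M_\varphi \]
where $E$ is an invertible diagonal matrix. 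Applying the same to $N \stackrel{Y}\to M_\varphi$ yields $M'' , N''$ with
\[ N \twoheadleftarrow M'' \stackrel{E'}\to N'' \twoheadrightarrow M_\varphi \enspace. \]
Decoding the directions via Lemmata~\ref{lm:FSim} and~\ref{lm:BSim}, $M \twoheadleftarrow M'$ means $M' \stackrel{C}\to M$ for a transfer matrix $C$ with $C^{\mathrm T}$ functional, and $N' \twoheadrightarrow M_\varphi$ means $N' \stackrel{D}\to M_\varphi$ for functional $D$; likewise on the $N$ side. Chaining these simulations (using that simulation composes, as noted after Definition~\ref{df:Conj}) produces the finite zig-zag
\[ M \xleftarrow{} M' \xrightarrow{} N' \xrightarrow{} M_\varphi \xleftarrow{} N'' \xleftarrow{} M'' \xrightarrow{} N \enspace, \]
which is a chain of simulations joining $M$ and $N$, establishing the first claim.

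For the second claim — the existence of a \emph{single} wta simulated by both $M$ and $N$ — I would argue directly that $M_\varphi$ itself works, but the zig-zag above only gives simulations \emph{into} $M_\varphi$ from the intermediate automata $N'$ and $N''$, not from $M$ and $N$ themselves. So instead I would invoke Theorem~\ref{thm:Joint}: it suffices to exhibit a finite set $V \subseteq A^{Q \cup P}$ satisfying properties (i) and (ii) for $M + N$. Here I would use that $\mathcal A$ is a field, so the $\mathcal A$-semimodule generated by the reachable vectors $\{h_{\mu'}(t) \mid t \in T_\Sigma\}$ inside $A^{Q \cup P}$ is a finite-dimensional vector space; taking $V$ to be a finite spanning set of this subspace (closed under the transition maps, which is automatic since the subspace is a submodule) gives property (i), while property (ii) — that $v_1 F = v_2 G$ on the $Q$- and $P$-components — follows from $\sem M = \sem N$ together with the claim $h_{\mu}(t) = h_{\mu'}(t)|_Q$ and analogously for $N$, since every reachable vector has this form and the two automata are equivalent. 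Theorem~\ref{thm:Joint} then yields a single wta $M_{\sem M}$ simulating both $M$ and $N$, which is moreover minimal because it embeds (via simulation) into $M_\varphi$, forcing its state count down to the dimension of the syntactic algebra.

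The main obstacle is the second part: one must be careful that property~(ii) of Theorem~\ref{thm:Joint} genuinely follows from equivalence of $M$ and $N$ rather than from the stronger hypothesis that they are already connected. The key point is that a vector $v \in V$ of the form $h_{\mu'}(t)$ for the sum automaton $M+N$ splits as $(h_\mu(t), h_\nu(t))$, and then $v_1 F = h_\mu(t) F = (\sem M, t) = (\sem N, t) = h_\nu(t) G = v_2 G$; since $V$ can be chosen to consist of such reachability vectors (or $\mathcal A$-linear combinations thereof, for which (ii) is preserved by linearity and the fact that $F$ and $G$ act linearly), property~(ii) holds on all of $V$. Finiteness of $V$ is exactly where the field hypothesis is essential, via finite-dimensionality of the syntactic algebra from \cite[Proposition~2]{aleboz89}. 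Once $V$ is in hand, Theorem~\ref{thm:Joint} does the rest, and minimality is a short dimension-counting remark using Theorem~\ref{thm:ReltoMin}.
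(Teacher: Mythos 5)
Your first part is essentially the paper's own argument: Theorem~\ref{thm:ReltoMin} gives $M \stackrel{X}\to M_\varphi$ and $N \stackrel{Y}\to M_\varphi$, and Theorem~\ref{thm:2} (applicable because a field is equisubtractive and additively generated by its units) decomposes each of these into a backward, a diagonal, and a forward step, yielding the zig-zag that joins $M$ and $N$. This is correct up to a small bookkeeping slip: Theorem~\ref{thm:2} asserts $M' \twoheadleftarrow M$ and $N' \twoheadrightarrow M_\varphi$, so the underlying simulations run $M \stackrel{C}\to M'$ (with $C^{\mathrm T}$ functional) and $N' \stackrel{D}\to M_\varphi$, not $M' \stackrel{C}\to M$ as you decode; since a chain of simulations may contain arrows in either orientation, this does not affect the first claim.

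The second part rests on a genuine confusion. You say the argument ``only gives simulations into $M_\varphi$ from the intermediate automata $N'$ and $N''$, not from $M$ and $N$ themselves'' --- but that is exactly what Theorem~\ref{thm:ReltoMin} already gave you in your opening paragraph: $M \stackrel{X}\to M_\varphi$ and $N \stackrel{Y}\to M_\varphi$ say, by Definition~\ref{df:Conj}, precisely that $M$ and $N$ both simulate $M_\varphi$, so the second claim holds with $M_{\sem M} = M_\varphi$ (which is minimal, being built on a basis of the finite-dimensional syntactic algebra). No further work is needed, and this is how the paper concludes. Worse, the substitute you propose does not prove the claim as stated: Theorem~\ref{thm:Joint} produces a wta $M''$ with $M'' \stackrel{X_1}\to M$ and $M'' \stackrel{X_2}\to N$, i.e.\ a wta that \emph{simulates} both $M$ and $N$, whereas Theorem~\ref{thm:Field} asserts a wta that is \emph{simulated by} both --- the dual direction, and simulation is only a preorder, so the two are not interchangeable. (The paper does remark that the joint-reduction route also works over a field, but as a companion result, not as a proof of this statement.) You should delete the detour and simply observe that $M_\varphi$ itself witnesses the second claim.
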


We could have obtained a similar theorem with the help of
Theorem~\ref{thm:Joint} because the finite set~$V$ can be obtained as
in~\cite{boz91}.  The approach in the next section will cover this
case.

\subsection{\textsc{Noetherian} semirings}
\label{sec:Int}
Now, let ${\cal A}$ be a \textsc{Noetherian} semiring.  We construct
the finite set~$V$ as follows.  Let $V_0 = \{ \mu'_0(\alpha) \mid
\alpha \in \Sigma_0\}$ and
\[ V_{i + 1} = V_i \cup \bigl( \{ (v_1 \otimes \dotsm \otimes v_k)
\cdot \mu'_k(\sigma) \mid \sigma \in \Sigma_k, \seq v1k \in V_i\}
\setminus \langle V_i\rangle \bigr) \] for every $i \in \nat$.  Then
\[ \{ 0 \} \subseteq \langle V_0 \rangle \subseteq \langle V_1 \rangle
\subseteq \dotsb \subseteq \langle V_k \rangle \subseteq \dotsb {}
\] is stationary after finitely many steps because ${\cal A}$ is
\textsc{Noetherian}.  Thus, let $V = V_k$ for some $k \in \nat$ such
that $\langle V_k \rangle = \langle V_{k+1} \rangle$. Clearly, $V$~is
finite and has property~(i).  Trivially, $V \subseteq \{ h_{\mu'}(t)
\mid t \in T_\Sigma \}$, so let $v \in V$ be such that $v = \sum_{i
  \in I} (h_\mu(t_i), h_\nu(t_i))$ for some finite index set~$I$ and
$t_i \in T_\Sigma$ for every $i \in I$.  Then
\begin{align*}
  \Bigl( \sum_{i \in I} h_\mu(t_i) \Bigr) F = \sum_{i \in I} (\sem M,
  t_i) = \sum_{i \in I} (\sem N, t_i) = \Bigl( \sum_{i \in I}
  h_\nu(t_i) \Bigr) G
\end{align*}
because $\sem M = \sem N$, which proves property~(ii).

\begin{theorem}
  \label{thm:Noeth}
  Let ${\cal A}$ be a \textsc{Noetherian} semiring.  For every two
  equivalent wta $M$~and~$N$ over~${\cal A}$, there exists a chain of
  simulations that join $M$~and~$N$.  In fact, there exists a single 
  wta that simulates both $M$~and~$N$.
\end{theorem}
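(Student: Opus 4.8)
The plan is to invoke Theorem~\ref{thm:Joint}, so all that is needed is to verify that the set~$V$ constructed just above the theorem statement is finite and enjoys properties (i)~and~(ii). First I would form $M + N = (\Sigma, Q \cup P, \mu', F')$ as in Section~\ref{sec:Category} (assuming without loss of generality that $Q \cap P = \emptyset$; otherwise rename states), and consider the ascending chain $\langle V_0 \rangle \subseteq \langle V_1 \rangle \subseteq \dotsb$ of subsemimodules of~$A^{Q \cup P}$ generated by the sets $V_i$ defined by the closure process: $V_0 = \{\mu'_0(\alpha) \mid \alpha \in \Sigma_0\}$ and $V_{i+1}$ adjoins to $V_i$ every vector $(v_1 \otimes \dotsm \otimes v_k) \cdot \mu'_k(\sigma)$ with $\sigma \in \Sigma_k$ and $v_1, \dotsc, v_k \in V_i$ that does not already lie in $\langle V_i \rangle$.

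Since $A^{Q \cup P}$ is a finitely generated $\mathcal A$-semimodule and $\mathcal A$ is \textsc{Noetherian}, each $\langle V_i \rangle$, being a subsemimodule, is finitely generated, and the chain must stabilize: there is $k$ with $\langle V_k \rangle = \langle V_{k+1} \rangle$. Setting $V = V_k$, I would observe that $V$ is finite by construction. For property~(i): given $\sigma \in \Sigma_k$ and $v_1, \dotsc, v_k \in V = V_k$, the vector $(v_1 \otimes \dotsm \otimes v_k) \cdot \mu'_k(\sigma)$ lies in $V_{k+1} \cup \langle V_k \rangle \subseteq \langle V_{k+1} \rangle = \langle V_k \rangle = \langle V \rangle$, as required. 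For property~(ii): by an easy induction every $v \in V$ has the form $v = \sum_{i \in I} h_{\mu'}(t_i)$ for a finite family $(t_i)_{i \in I}$ of trees in $T_\Sigma$ (the base case is $v = \mu'_0(\alpha) = h_{\mu'}(\alpha)$, and the closure step corresponds to applying $\sigma$ at the root, using bilinearity of the \textsc{Kronecker} product and matrix multiplication over the commutative semiring); splitting $h_{\mu'}(t_i) = (h_\mu(t_i), h_\nu(t_i))$ according to the block structure $Q \cup P$ and using $\sem M = \sem N$, I get $\bigl(\sum_{i \in I} h_\mu(t_i)\bigr) F = \sum_{i \in I}(\sem M, t_i) = \sum_{i \in I}(\sem N, t_i) = \bigl(\sum_{i \in I} h_\nu(t_i)\bigr) G$, i.e.\ $v_1 F = v_2 G$ for the two halves of $v$.

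With (i)~and~(ii) established, Theorem~\ref{thm:Joint} immediately yields a finite chain of simulations joining $M$~and~$N$, and in fact a single wta---namely the $M''$ built there from $V$, whose transfer matrices $X_1$~and~$X_2$ witness $M'' \stackrel{X_1}\to M$ and $M'' \stackrel{X_2}\to N$---that simulates both. I expect the only real point requiring care to be the bookkeeping in property~(ii): checking that the closure operation on $V$ exactly mirrors the recursion defining $h_{\mu'}$ so that the "sum of tree-behaviours" representation is preserved at every step; this is where commutativity of $\mathcal A$ is used, to commute scalars past the \textsc{Kronecker} products. Everything else is a direct appeal to the \textsc{Noetherian} hypothesis for termination of the chain and to Theorem~\ref{thm:Joint} for the conclusion.
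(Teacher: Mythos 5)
Your proposal is correct and follows essentially the same route as the paper: the same iterative construction of the sets~$V_i$, the same appeal to the \textsc{Noetherian} hypothesis to stabilize the chain $\langle V_0\rangle \subseteq \langle V_1\rangle \subseteq \dotsb$, the same verification of properties (i)~and~(ii) via the representation of elements of~$V$ as (sums of) behaviour vectors $h_{\mu'}(t)$, and the same final appeal to Theorem~\ref{thm:Joint}. (Your worry about needing commutativity in step~(ii) is unnecessary --- only bilinearity of the \textsc{Kronecker} product and matrix multiplication is used --- but this does not affect correctness.)
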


\begin{proof}
  Follows from Theorem~\ref{thm:Joint}.
\end{proof}

Since $\integer$~forms a \textsc{Noetherian} ring, we obtain the
following corollary.

\begin{corollary}[{\protect{of Theorem~\ref{thm:Noeth}}}]
  \label{cor:Int}
  For every two equivalent wta $M$~and~$N$ over~$\integer$, there
  exists a chain of simulations that join $M$~and~$N$.  In fact, there
  exists a single wta that simulates both $M$~and~$N$.
\end{corollary}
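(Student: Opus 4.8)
The plan is to derive Corollary~\ref{cor:Int} as an immediate instance of Theorem~\ref{thm:Noeth}. The only thing that needs to be checked is that $\integer$, regarded as a semiring, satisfies the hypothesis of that theorem, namely that it is \textsc{Noetherian} in the sense defined in Section~\ref{sec:Prelim}: every subsemimodule of a finitely-generated $\integer$-semimodule is again finitely-generated. Once this is in place, Theorem~\ref{thm:Noeth} applied to two equivalent wta $M$ and $N$ over $\integer$ yields a single wta that simulates both, and in particular a finite chain of simulations joining them, which is exactly the assertion of the corollary.

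First I would recall that $\integer$ is a commutative ring, and that a finitely-generated $\integer$-semimodule need not be a module in the classical sense — a subsemimodule is only required to be closed under addition and under the action of \emph{all} of $\integer$, including negative integers, so in fact every subsemimodule of a $\integer$-semimodule that happens to be a $\integer$-module is a submodule. The key observation is that a finitely-generated $\integer$-semimodule $B$, being generated by $b_1,\dots,b_n$ and closed under the action of $\integer$, coincides with the $\integer$-module generated by $b_1,\dots,b_n$ (since $-b_i = (-1)\cdot b_i \in B$), hence $B$ is a finitely-generated $\integer$-module. Then I would invoke the classical fact that $\integer$ is a \textsc{Noetherian} ring: submodules of finitely-generated modules over a \textsc{Noetherian} ring are finitely-generated. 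A subsemimodule $C \subseteq B$ is again closed under the $\integer$-action, hence is a submodule of the finitely-generated $\integer$-module $B$, hence finitely-generated as a module, hence finitely-generated as a semimodule. This establishes that $\integer$ is a \textsc{Noetherian} semiring.

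With that verified, the corollary is immediate: apply Theorem~\ref{thm:Noeth} to the equivalent wta $M$ and $N$ over $\integer$ to obtain a single wta that simulates both, which gives the desired chain of simulations.

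I do not expect any real obstacle here; the content is entirely the reconciliation of the semiring-theoretic \textsc{Noetherian} condition of the paper with the ring-theoretic one, and this reconciliation is routine because the $\integer$-action forces every finitely-generated semimodule to be an honest module. The only point worth stating carefully is that closure of subsemimodules under negation lets one pass between ``finitely-generated semimodule'' and ``finitely-generated module'' without loss, so that Hilbert's basis theorem (in the trivial form: a principal ideal domain is \textsc{Noetherian}) applies verbatim.
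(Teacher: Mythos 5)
Your proposal is correct and follows exactly the paper's route: the paper justifies the corollary with the single remark that $\integer$ is a \textsc{Noetherian} ring and then invokes Theorem~\ref{thm:Noeth}, and you merely make explicit the routine verification that the $\integer$-action (via $-1$) turns every finitely-generated $\integer$-semimodule into a finitely-generated module and every subsemimodule into a submodule, so the ring-theoretic \textsc{Noetherian} property yields the semiring-theoretic one. No gap; this is the intended argument, spelled out in slightly more detail.
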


In fact, since $M + N$ uses only finitely many semiring coefficient,
it is sufficient that every finitely generated subsemiring of~${\cal
  A}$ is contained in a \textsc{Noetherian} subsemiring of~${\cal A}$.
Since every finitely generated commutative ring is
\textsc{Noetherian}~\cite[Cor.~IV.2.4 \& Prop.~X.1.4]{lan84}, we
obtain the following corollary.

\begin{corollary}[{\protect{of Theorem~\ref{thm:Noeth}}}]
  \label{cor:Ring}
  For every two equivalent wta $M$~and~$N$ over the commutative
  ring~${\cal A}$, there exists a chain of simulations that join
  $M$~and~$N$.  In fact, there exists a single wta that simulates both
  $M$~and~$N$.
\end{corollary}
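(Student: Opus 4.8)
The plan is to reduce the statement to Theorem~\ref{thm:Noeth} by passing to a finitely generated subring. First I would observe that $M = (\Sigma, Q, \mu, F)$ and $N = (\Sigma, P, \nu, G)$, and hence their sum $M + N$, mention only finitely many elements of $A$: the entries of $F$ and $G$ together with the entries of the finitely many matrices $\mu_k(\sigma)$ and $\nu_k(\sigma)$ for $\sigma \in \Sigma$. Let $B \subseteq A$ be the subring generated by this finite set (closing under $0$, $1$, addition, multiplication, and additive inverses), and let $\mathcal{B} = (B, \mathord{+}, \mathord\cdot, 0, 1)$. Then $\mathcal{B}$ is a finitely generated commutative ring, hence \textsc{Noetherian}: it is a homomorphic image of a polynomial ring $\integer[x_1, \dots, x_n]$, which is \textsc{Noetherian} by the result cited above, and quotients of \textsc{Noetherian} rings are \textsc{Noetherian}.

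Next I would regard $M$ and $N$ as wta over $\mathcal{B}$. This is legitimate: all coefficients already lie in $B$, and the semantics $h_\mu$ and $h_\nu$ only ever combine existing coefficients by addition and multiplication, so $h_\mu(t) \in B^Q$ and $h_\nu(t) \in B^P$ for every $t \in T_\Sigma$; in particular the equivalence $\sem{M} = \sem{N}$ persists when $M$ and $N$ are viewed over $\mathcal{B}$. Applying Theorem~\ref{thm:Noeth} to $M$ and $N$ over the \textsc{Noetherian} semiring $\mathcal{B}$ then yields a single wta $M'$ over $\mathcal{B}$, together with transfer matrices over $B$, such that $M' \stackrel{X_1}\to M$ and $M' \stackrel{X_2}\to N$.

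Finally, since $B \subseteq A$, the wta $M'$ is also a wta over $\mathcal{A}$ and $X_1, X_2$ are matrices over $A$; the conditions of Definition~\ref{df:Conj} are identities among semiring elements and therefore remain valid over $\mathcal{A}$. Hence $M'$ simulates both $M$ and $N$ over $\mathcal{A}$, which in particular gives the chain $M \xleftarrow{X_1} M' \xrightarrow{X_2} N$ of simulations.

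I do not expect a genuine obstacle: the only substantial ingredient is that finitely generated commutative rings are \textsc{Noetherian}, which is precisely the fact invoked in the paragraph preceding the corollary. The one point requiring (routine) care is that the inductive construction of the set $V$ in Section~\ref{sec:Int}, carried out over $\mathcal{B}$, really terminates — but this is immediate, since the ascending chain $\langle V_0 \rangle \subseteq \langle V_1 \rangle \subseteq \cdots$ consists of $\mathcal{B}$-subsemimodules of the finitely generated $\mathcal{B}$-semimodule $B^{Q \cup P}$ and hence stabilizes by \textsc{Noetherian}ity of $\mathcal{B}$.
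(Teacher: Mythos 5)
Your proposal is correct and matches the paper's own argument: the paper likewise observes that $M + N$ uses only finitely many coefficients, passes to the finitely generated (hence \textsc{Noetherian}) commutative subring they generate, and invokes Theorem~\ref{thm:Noeth}. You merely spell out the routine details (termination of the construction of $V$ over the subring, and lifting the resulting simulations back to $\mathcal{A}$) that the paper leaves implicit.
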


\subsection{Natural numbers}
\label{sec:Nat}
Finally, let ${\cal A} = \nat$ be the semiring of natural numbers.  We
compute the finite set~$V \subseteq \nat^{Q \cup P}$ as follows:
\begin{enumerate}
\item Let $V_0 = \{ \mu'_0(\alpha) \mid \alpha \in \Sigma_0\}$ and $i
  = 0$.
\item For every $v, v' \in V_i$ such that $v \leq v'$, replace~$v'$
  by~$v' - v$.
\item Set $V_{i + 1} = V_i \cup \bigl( \{ (v_1 \otimes \dotsm \otimes
  v_k) \cdot \mu'_k(\sigma) \mid \sigma \in \Sigma_k, \seq v1k \in
  V_i\} \setminus \langle V_i\rangle \bigr)$.
\item Until $V_{i + 1} = V_i$, increase~$i$ and repeat step~2.
\end{enumerate}
Clearly, this algorithm terminates since every vector can only be
replaced by a smaller vector in step~2 and step~3 only adds a finite
number of vectors, which after the reduction in step~2 are pairwise
incomparable.  Moreover, property~(i) trivially holds because at
termination $V_{i+1} = V_i$ after step~3.  Consequently, we only need
to prove property~(ii).  To this end, we first prove that $V \subseteq
\langle \{ h_{\mu'}(t) \mid t \in T_\Sigma \} \rangle_{\mathord{+},
  \mathord{-}}$.  This is trivially true after step~1 because
$\mu'_0(\alpha) = h_{\mu'}(\alpha)$ for every $\alpha \in \Sigma_0$.
Clearly, the property is preserved in steps 2~and~3.  Finally,
property~(ii) can now be proved as follows. Let $v \in V$ be such that
$v = \sum_{i \in I_1} (h_\mu(t_i), h_\nu(t_i)) - \sum_{i \in I_2}
(h_\mu(t_i), h_\nu(t_i))$ for some finite index sets $I_1$~and~$I_2$
and $t_i \in T_\Sigma$ for every $i \in I_1 \cup I_2$.  Then
\begin{align*}
  &\phantom{{}={}} \Bigl( \sum_{i \in I_1} h_\mu(t_i) - \sum_{i \in
    I_2} h_\mu(t_i) \Bigr) F = \sum_{i \in I_1} h_\mu(t_i)F - \sum_{i
    \in I_2} h_\mu(t_i)F \\
  &= \sum_{i \in I_1} (\sem M, t_i) - \sum_{i \in I_2} (\sem M, t_i) =
  \sum_{i \in I_1} (\sem N, t_i) - \sum_{i \in I_2} (\sem N, t_i) \\
  &= \sum_{i \in I_1} h_\nu(t_i)G - \sum_{i \in I_2} h_\nu(t_i)G =
  \Bigl( \sum_{i \in I_1} h_\nu(t_i) - \sum_{i \in I_2} h_\nu(t_i)
  \Bigr) G
\end{align*}
because $\sem M = \sem N$.

\begin{corollary}[{\protect{of Theorem~\ref{thm:Joint}}}]
  \label{cor:Nat}
  For every two equivalent wta $M$~and~$N$ over~$\nat$, there exists a
  chain of simulations that join $M$~and~$N$. In fact, there exists a
  single wta that simulates both $M$~and~$N$.
\end{corollary}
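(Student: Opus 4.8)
The plan is to invoke Theorem~\ref{thm:Joint}: once we have a \emph{finite} set $V \subseteq \nat^{Q \cup P}$ satisfying properties~(i) and~(ii) for $M + N$, the theorem immediately produces a single wta simulating both $M$ and $N$, hence a chain of simulations joining them. So everything reduces to constructing such a $V$ and verifying its three features: finiteness, property~(i), and property~(ii).

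For the construction I would mimic the \textsc{Noetherian} case of Section~\ref{sec:Int} --- start with $V_0 = \{\mu'_0(\alpha) \mid \alpha \in \Sigma_0\}$ and repeatedly close under $\seq v1k \mapsto (v_1 \otimes \dotsm \otimes v_k)\cdot\mu'_k(\sigma)$, adding only the vectors not already in $\langle V_i\rangle$ --- but with one extra twist: after each round, whenever two current generators $v \le v'$ are comparable in the pointwise order, replace $v'$ by $v' - v$. This twist is forced because $\nat$ is \emph{not} \textsc{Noetherian}: the ascending chain $\langle V_0\rangle \subseteq \langle V_1\rangle \subseteq \dotsb$ of $\nat$-subsemimodules of $\nat^{Q \cup P}$ need not stabilize on its own, so the naive saturation of Section~\ref{sec:Int} may diverge. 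The subtraction step is harmless for the generated semimodule, since $v' = v + (v' - v)$ shows $\langle V_i\rangle$ is unchanged, so property~(i) is unaffected; and when the algorithm halts (i.e.\ the saturation step adds nothing new) property~(i) holds by construction.

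For property~(ii) I would maintain the invariant that every vector ever inserted into $V$ is an integer combination (sums and differences) of vectors $h_{\mu'}(t)$ with $t \in T_\Sigma$. This holds at the start because $\mu'_0(\alpha) = h_{\mu'}(\alpha)$, it is obviously preserved by the subtraction step, and it is preserved by the saturation step because $(h_{\mu'}(t_1)\otimes\dotsm\otimes h_{\mu'}(t_k))\cdot\mu'_k(\sigma) = h_{\mu'}(\sigma(\seq t1k))$. Given $v = (v_1,v_2) \in V$ written as $\sum_{i\in I_1}(h_\mu(t_i),h_\nu(t_i)) - \sum_{i\in I_2}(h_\mu(t_i),h_\nu(t_i))$, the hypothesis $\sem M = \sem N$ gives $h_\mu(t)F = (\sem M,t) = (\sem N,t) = h_\nu(t)G$ for every $t$, whence $v_1 F = v_2 G$; that is property~(ii).

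The main obstacle is termination --- i.e.\ finiteness of $V$. One has to argue that interleaving ``add finitely many new vectors'' with ``reduce comparable pairs by subtraction'' cannot loop forever: each subtraction strictly drops a vector in the well-founded pointwise order, each round creates only finitely many new threads, and after the reductions the surviving generators form an antichain, which by \textsc{Dickson}'s lemma (well-quasi-orderedness of $(\nat^{Q\cup P},\le)$) must be finite. This well-quasi-order argument is precisely what stands in for the \textsc{Noetherian} hypothesis used in Section~\ref{sec:Int}, and getting it exactly right --- especially the bookkeeping showing the process cannot restart indefinitely --- is the delicate point; the rest is routine.
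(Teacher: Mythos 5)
Your proposal is correct and follows essentially the same route as the paper: the identical algorithm (saturate with $(v_1 \otimes \dotsm \otimes v_k)\cdot\mu'_k(\sigma)$, interleaved with replacing $v'$ by $v'-v$ whenever $v \le v'$), the same invariant that every generator lies in $\langle \{h_{\mu'}(t) \mid t \in T_\Sigma\}\rangle_{\mathord{+},\mathord{-}}$ to obtain property~(ii), and then an appeal to Theorem~\ref{thm:Joint}. Your explicit appeal to \textsc{Dickson}'s lemma merely spells out the termination argument that the paper states more tersely (incomparable vectors after reduction, each replacement strictly decreasing in the well-founded pointwise order).
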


For all finitely and effectively presented semirings, Theorems
\ref{thm:Field}~and~\ref{thm:Noeth} and Corollaries
\ref{cor:Ring}~and~\ref{cor:Nat}, also yield decidability of
equivalence for $M$~and~$N$.  Essentially, we run the trivial
semi-decidability test for inequality and a search for the wta the
simulates both $M$~and~$N$ in parallel.  We know that either test will
eventually return, thus deciding whether $M$~and~$N$ are equivalent.
Conversely, if equivalence is undecidable, then simulation cannot
capture equivalence~\cite{esimal10}.

\bibliography{extra}
\bibliographystyle{splncs03}

\end{document}